\def\v #1{\vert #1\vert}             %Para denotar elgrado de #1
\def\m #1 #2{(-1)^{{\v #1} {\v #2}}} %Para denotar el signo (-1)^...
\theoremstyle{plain}
\newtheorem{theorem}{Theorem}
\theoremstyle{definition}
\def\<#1>{\langle#1\rangle}
\newcommand{\Reeb}{\mathcal{R}}
\newcommand{\dd}{\mathrm{d}}
\newcommand{\dv}[2]{\frac{\dd #1}{\dd #2}}
\newcommand*{\contr}[1]{i_{#1}}
\begin{document}

\centerline{\Large \bf A Discrete Hamilton--Jacobi Theory}
\medskip
\centerline{\Large \bf for Contact Hamiltonian Dynamics}\vskip 0.25cm

\medskip
\medskip

\begin{center}
O\u{g}ul Esen\footnote{E-mail: 
\href{mailto:oesen@gtu.edu.tr}{oesen@gtu.edu.tr}}\\
Department of Mathematics, \\ Gebze Technical University, 41400 Gebze,
Kocaeli, Turkey.

\bigskip

Cristina Sard\'on\footnote{E-mail: \href{mailto:mariacristina.sardon@upm.es}{mariacristina.sardon@upm.es}}
\\ Department of Applied Mathematics 
\\ Universidad Polit\'ecnica de Madrid 
\\ C/ Jos\'e Guti\'errez Abascal, 2, 28006, Madrid. Spain.

\bigskip 

Marcin Zajac\footnote{E-mail: \href{marcin.zajac@fuw.edu.pl}{marcin.zajac@fuw.edu.pl}}
\\
Department of Mathematical Methods in Physics,\\
Faculty of Physics. University of Warsaw,\\
ul. Pasteura 5, 02-093 Warsaw, Poland.

\end{center}

%\medskip

\begin{abstract} 
In this paper, we propose a discrete Hamilton--Jacobi theory for (discrete) Hamiltonian dynamics defined on a (discrete) contact manifold. To this end, we first provide a novel geometric Hamilton--Jacobi theory for continuous contact Hamiltonian dynamics. Then, rooting on the discrete contact Lagrangian formulation, we obtain the discrete equations for Hamiltonian dynamics by the discrete Legendre transformation. Based on the discrete contact Hamilton equation, we construct a discrete Hamilton--Jacobi equation for contact Hamiltonian dynamics. We show how the discrete Hamilton--Jacobi equation is related to the continuous Hamilton--Jacobi theory presented in this work. Then, we propose geometric foundations of the discrete Hamilton--Jacobi equations on contact manifolds in terms of discrete contact flows. At the end of the paper we provide a numerical example to test the theory. 
\\

 \noindent  \textbf{Keywords:} Hamilton-Jacobi theory; Discrete dynamics; Contact manifolds; Discrete Hamilton--Jacobi. 
   \\
   \textbf{MSC2020:} 65P10; 37J55; 70H20.
\end{abstract}

\setlength{\parskip}{4mm}

\onehalfspacing
\section{Introduction}

This work lies in the intersection among the geometric Hamilton-Jacobi (abbreviated as HJ) theory, discrete dynamics and contact geometry. A HJ theory for discrete Hamiltonian dynamics on contact manifolds was missing in the literature, this is why our aim in this work is to fill this gap by introducing a discrete HJ equation for discrete contact Hamiltonian dynamics. Additionally, we shall examine the geometric foundations of the discrete HJ equation. Accordingly, we shall propose a geometric discrete HJ theorem in the contact framework.  

The Hamilton-Jacobi equation was first given for classical and continuous Hamiltonian dynamics on symplectic manifolds \cite{Arnold-book,Goldstein-book}. More recently, a geometrization of the HJ equation was established in \cite{carinena2006geometric}. Since this work, various applications and generalizations of the geometric HJ theory have been exhibited. 
We refer to two recent surveys \cite{esen2022reviewing,roman2021overview} (and references therein) for a contemporary view of the current geometric HJ theory. 

On the other hand, in recent years there has been a growing trend in providing proper discrete analogs of continuous differential equations and designing numerical methods adapted to the type of discretization pursued, as well as the type of differential equations that are dealt with. Numerical methods have shown their utility in solving equations that cannot be managed analytically \cite{At,iserles1}. In the branch of geometric mechanics, we deal with a plethora of geometric structures that provide different underlying geometric properties to dynamical systems. This is why when one discretizes a dynamical system, one has to make sure that the discretization is compatible with the geometric structure and that we are applying specific methods that preserve the geometric structure. These specific methods are known as geometric integrators \cite{BlanesCasas,Hairer}. For example, in classical mechanics we propose numerical methods that preserve the symplectic structure when we work on a phase space \cite{Candy,Yo}, others methods are energy-preserving numerical methods \cite{Quispel}, momentum-preserving methods \cite{Labudde}, etc.

Contact geometry is a popular theme in the recent literature \cite{BrCrTa17,LeonLainz2,deLeon2019a,de2020review,LeonLainz3}, being widely used to describe mechanical dissipative systems, dissipative field theories and generalizations of the Hamilton principle \cite{Herglotz,herg2}. Some of the main uses of contact geometry and its main characteristics can be consulted in \cite{Br17}. The dissipative character of the formalism provides an important geometric foundation for irreversible dynamics, especially thermodynamics. Here is an incomplete list of works related to contact mechanics and its role in thermodynamics \cite{Bravetti-thermo,gaset2020new,Goto,Grmela-Contact,Mrugala,Rajeev-contact}.

To propose a discrete Hamilton--Jacobi theory on a contact manifold as we shall, one needs to review first the discrete formulation of mechanics on the Lagrangian side \cite{MarsWest}. This leads to the discretization of Lagrangian and Hamiltonian systems, as well as the variational principles for dynamical systems and principles of critical action on both the tangent and cotangent bundle \cite{guibout2004discrete,Marsden-LoM}. 
Such discretizations led to discretized versions of Noether's theorem, Legendre transformations, infinitesimal symmetries, etc.
The discretization of the Hamiltonian formulation gave rise to optimal control problems by developing a discrete maximum principle that yields discrete necessary conditions for optimality. Furthermore, discrete Hamiltonian theories have been particularly useful in distributed network optimization
and derivation of variational integrators \cite{Lall}. 
The geometry of the space is also a key point to performing better discretizations. For this matter, it is important to rely on symmetries and invariants of the geometric space \cite{budd}.
 In this work we preserve the contact structure under discretization, minimizing the error in the approximation. Some very recent works addressing discrete Lagrangian and Hamiltonian dynamics on contact settings are \cite{Bravetti2,Bravetti1,Simoes2020b,Simoes21,Vermeeren}. In these works, one can see the discrete generalized Lagrangian (Herglotz) dynamics on the extended tangent bundle as well as the discrete Hamiltonian dynamics on a contact manifold. Since these works are fundamental for the present study, we shall give a quick review of the theories in the main body of the paper.

We refer to \cite{OhsawaBlochLeok} for the discrete HJ equation on a symplectic manifold. In \cite{OhsawaBlochLeok}, the HJ equation is derived by employing discrete symplectic flows. In the present work, in similar fashion, we carry this discussion to contact geometry. The role of the discrete symplectic flows will be played by discrete contact flows analogously.

More recently, in  \cite{LeonSar3}, the geometrization of the discrete HJ equation in \cite{OhsawaBlochLeok} has been established in the symplectic category. In this regard, it is possible to consider the present work as a continuation or an extension of the works \cite{LeonSar3,OhsawaBlochLeok} to contact geometry and discrete contact Hamiltonian dynamics. In the present paper, we both derive the discrete HJ equation on contact manifolds and then proceed with its geometrization. 

The main body of this work contains three sections. In the upcoming  Section \ref{CCD-Section}, we shall first review the fundamental principles of contact manifolds as well as Lagrangian and Hamiltonian dynamics on contact manifolds, in order to fix the notation. In Subsection \ref{contact:hamiltonian}, we shall provide a HJ theorem for  contact Hamiltonian dynamics.  
We shall start Section \ref{DCD-Section} by presenting discrete Lagrangian and Hamiltonian contact dynamics. In subsection \ref{discrete-HJ-Section}, we shall present our main result which is a Hamilton--Jacobi equation for discrete contact Hamiltonian dynamics. We shall prove this result in terms of discrete contact flows. 
In Subsection \ref{gHJ-Section}, a geometrization of the HJ is provided. 
In Section \ref{Application}, we conclude this work by proposing a numerical example applied to the well-known parachute equation in contact dynamics.
To avoid mathematical conflict and without loss of generalization, we assume all objects to be smooth and globally defined unless stated otherwise. Manifolds are connected and differentiable.

\section{Fundamentals of Continuous Contact Dynamics}\label{CCD-Section}

In this section, we first recall briefly the main definitions and results of the theory of Lagrangian and Hamiltonian dynamics on contact manifolds following \cite{BrCrTa17,deLeon2019a,de2020review}. Later we introduce a geometric Hamilton-Jacobi theory for Hamiltonian dynamics on contact manifolds. This novel theorem will be the continuous version of the discrete HJ theorem that will be presented in the upcoming section. 

\subsection{Contact Manifolds}\label{Sec-Cont-Man}

We call a contact manifold a pair $(M,\eta)$, where $M$ is an odd-dimensional manifold, say $(2n+1)$-dimensional, with a contact form $\eta$, i.e., a one-form on $M$ such that $\eta \wedge \dd \eta^n\neq 0$ is a volume form. This type of  manifold has a distinguished vector field, the Reeb vector field $\Reeb$, which is the unique vector field that satisfies the two following identities.
\begin{equation}
	\contr{\Reeb} \dd \eta = 0, \qquad \eta(\Reeb)=1.
\end{equation}
There exists a Darboux coordinate system $(\mathbf{q}, \mathbf{p},s)=(q^i, p_i, s)$ on $M$ (with $i$ ranging from 1 to the dimension of $M$) such that the contact one-form reads
\begin{equation}\label{can-contat-form}
	 \eta = \dd s - \mathbf{p} \cdot  \dd \mathbf{q}.
\end{equation}
In these coordinates, we have $\Reeb = \nabla_s = \partial/ \partial s$. This local observation provides an example of a contact manifold as the extended cotangent bundle 
\begin{equation}\label{eq:cotangent_contact_structure}
\big(T^*Q\times \mathbb{R}, \eta  = \dd s - \theta_Q\big) 
\end{equation}
where $\theta_Q$ is the  pullback of the tautological one-form of $T^*Q$.

\textbf{Musical Morphisms.} For a contact manifold $(M, \eta)$, there is a musical isomorphism 
\begin{equation}\label{flat-map}
\flat:TM\longrightarrow T^*M,\qquad v\mapsto \iota_v \dd \eta+\eta(v)\eta.
\end{equation} 
This mapping takes the Reeb field $\mathcal{R}$ to the contact one-form $\eta$. We denote the inverse of this mapping by $\sharp$. Referring to this, we define a bivector field $\Lambda$ on  $M$ as
\begin{equation}\label{Lambda}
\Lambda(\alpha,\beta)=-\dd \eta(\sharp\alpha, \sharp \beta). 
\end{equation}
Then referring to $\Lambda$ we introduce the following musical mapping 
\begin{equation}\label{Sharp-Delta}
\sharp_\Lambda: T^*M\longrightarrow TM, \qquad  \alpha\mapsto \Lambda(\alpha,\bullet)= \sharp \alpha - \alpha(\mathcal{R})  \mathcal{R}. \end{equation}
Evidently, the mapping $\sharp_\Lambda$ fails to be an isomorphism. Notice that the kernel is spanned by the contact one-form $\eta$. 

\textbf{Legendrian  Submanifolds.}
Let $(M,\eta)$ be a contact manifold. Recall the associated bivector field $\Lambda$ defined in \eqref{Lambda}. Consider a linear subbundle $ \Xi$ of the tangent bundle $TM$ (that is, a distribution on $M$). We define the contact complement of $\Xi$ as
\begin{equation}
\Xi^\perp : = \sharp_ \Lambda(\Xi^o),
\end{equation}
 where the sharp map on the right hand side is the one in   \eqref{Sharp-Delta} and   $\Xi^o$ is the annihilator of  $\Xi$.  We say that $N$ is Legendrian if $TN^{\perp }= TN$. 

Consider the contact manifold $T^*Q\times \mathbb{R}$ in \eqref{eq:cotangent_contact_structure} and let $W$ be a real valued function on the base manifold $Q$. Its first prolongation is 
\begin{equation}\label{j1F}
\rm{J}^1  W:Q\longrightarrow  T^*Q\times \mathbb{R},\qquad \mathbf{q}\mapsto \big(\mathbf{q}, W_\mathbf{q},W(\mathbf{q})\big).
\end{equation}
The image space of the first prolongation $\rm{J}^1   W$ is  a Legendrian  submanifold of $T^*Q\times \mathbb{R}$. The converse of this assertion is also true, that is, if the image space of a section   of $T^*Q\times \mathbb{R}\mapsto Q$ is a Legendrian submanifold then it is  the first prolongation of a function $W$. 
 
 \textbf{Contact Diffeomorphisms.} Let $(M,\eta)$ and $(\widehat{M},\widehat{\eta})$ be two contact manifolds. A diffeomorphism $\varphi$ from $M$ to $\widehat{M}$ is said to be a contact diffeomorphism (or contactomorphism) if it preserves the contact structures, that is, $T\varphi(\ker \eta)=\ker\widehat{\eta}$. In terms of contact forms, a contact diffeomorphism $\varphi$ is the one satisfying 
   \begin{equation}\label{Cont-Dif}
        \varphi^*\widehat{\eta} = \mu \eta.
    \end{equation}
 where $\mu $ is a conformal factor. If we additionally impose that the conformal factor $\mu$ in definition \eqref{Cont-Dif} has to be equal to one, we arrive at the conservation of the contact form. 
We call such a mapping a strict contact diffeomorphism (or quantomorphism).

Consider two contact manifolds $(M,\eta)$ and $(\widehat{M},\widehat{\eta})$. A contact product is the product manifold $\widehat{M} \times M  \times \mathbb{R}$ with a contact one-form $\tau\widehat{\eta}\ominus \eta$ where $\tau$ is a global coordinate on $\mathbb{R}$ \cite{banyaga97}. It is possible to validate that the graph of a contact diffeomorphism $\varphi$ is a Legendrian submanifold of the contact product $\widehat{M} \times M  \times \mathbb{R}$. 

\textbf{Generating Functions For Legendrian Submanifolds.}
In particular, consider two same dimensional extended cotangent bundles $T^*Q\times \mathbb{R}$ and $T^*\widehat{Q}\times \widehat{\mathbb{R}}$ equipped with  Darboux' coordinates $(\mathbf{q},\mathbf{p},s)$ and $(\widehat{\mathbf{q}},\widehat{\mathbf{p}},\widehat{s})$, respectively. Then, we determine the contact product $\big(T^*\widehat{Q}\times \widehat{\mathbb{R}} \big) \times \big( T^*Q\times \mathbb{R} \big) \times  \mathbb{R} $ of these two contact manifolds and consider it as a fiber bundle over the product manifold $\widehat{Q}\times Q \times \mathbb{R}$ given by
 \begin{equation}\label{cont-prod-d}
\big(T^*\widehat{Q}\times \widehat{\mathbb{R}} \big) \times \big( T^*Q\times \mathbb{R} \big) \times  \mathbb{R} \longrightarrow  \widehat{Q}\times Q \times \mathbb{R}, \qquad (\widehat{\mathbf{q}},\widehat{\mathbf{p}},\widehat{s};\mathbf{q},\mathbf{p},s;\tau)\mapsto (\widehat{\mathbf{q}},\mathbf{q},\widehat{s}),
\end{equation}
where $\tau$ is the coordinate on the extended $\mathbb{R}$. In the contact product manifold, the product contact one-form is defined to be 
 \begin{equation}\label{cont-prod-eta}
 \tau \widehat{\eta} - \eta=\tau (\dd\widehat{s}-\widehat{\mathbf{p}}\cdot \dd\widehat{ \mathbf{q}}) - (\dd s-\mathbf{p}\cdot \dd \mathbf{q}).
     \end{equation}
Determining the isomorphism 
 \begin{equation}
 \big(T^*\widehat{Q}\times \widehat{\mathbb{R}} \big) \times \big( T^*Q\times \mathbb{R} \big) \times  \mathbb{R} \simeq T^*(\widehat{Q}\times Q \times \mathbb{R})\times \mathbb{R}, 
   \end{equation}
we provide Darboux coordinates for the contact product manifold as 
      \begin{equation}\label{coord-concttc}
      (\widehat{ \mathbf{q}},\mathbf{q},\widehat{s};\widehat{\boldsymbol{\pi}},\boldsymbol{\pi},\widehat{\pi};z)
      =
      (\widehat{ \mathbf{q}}, \mathbf{q},\widehat{s};\tau\widehat{\mathbf{p}}, -\mathbf{p}, -\tau;-s ).
        \end{equation}
This enables us to recast the contact one-form  \eqref{cont-prod-eta} in the canonical form given in \eqref{can-contat-form}. So that the one-form \eqref{cont-prod-eta} turns out to be 
\begin{equation}
\tau \widehat{\eta} - \eta= \dd z - \widehat{\boldsymbol{\pi}}\cdot \dd \widehat{ \mathbf{q}} -
\boldsymbol{\pi} \cdot \dd \mathbf{q} -
\widehat{\pi} \dd \widehat{s}.
\end{equation}
On the base manifold $\widehat{Q}\times Q \times \mathbb{R}$, we define a smooth function in form
 \begin{equation}
W(\widehat{\mathbf{q}},\mathbf{q},\widehat{s})=\widehat{s}+S(\widehat{\mathbf{q}},\mathbf{q},\widehat{s}).
 \end{equation}
for a function $S$. The theory manifests that the image space of the first jet $\rm{J}^1  W$ is a Legendrian submanifold of the contact product manifold \eqref{cont-prod-d}.  The contact diffeomorphism determined by the Legendrian submanifold $im(\rm{J}^1  W)$ is computed to be
 \begin{equation}
 T^*Q\times \mathbb{R} \longrightarrow T^*\widehat{Q}\times \widehat{\mathbb{R}},\qquad (\mathbf{q},\mathbf{p},s)\mapsto (\widehat{\mathbf{q}},\widehat{\mathbf{p}},\widehat{s})
    \end{equation}
where one has the following identifications
  \begin{equation}\label{Leg-trf}  
  \mathbf{p}=D_2 S(\widehat{\mathbf{q}},\mathbf{q},\widehat{s}), \qquad s=\widehat{s}+S(\widehat{\mathbf{q}},\mathbf{q},\widehat{s}),\qquad
  \widehat{\mathbf{p}}= - \frac{D_1S(\widehat{\mathbf{q}},\mathbf{q},\widehat{s})}{1+D_3S(\widehat{\mathbf{q}},\mathbf{q},\widehat{s})},
   \end{equation}
where $D_iS$ represents the partial derivative of $S$ with respect to its $i$-th entry.

\subsection{Contact Lagrangian Dynamics}\label{contact:lagrangian}
Now we review the Lagrangian picture of contact systems. In~\cite{de2019singular} we give a more comprehensive description which also covers the case of singular Lagrangians. Let $Q$ be a $n$-dimensional configuration manifold and consider the extended tangent bundle $TQ \times \mathbb{R}$. If $\mathbf{q}=(q^i)$ is a local coordinate system, then the induced coordinates on the $(2n+1)$- dimensional manifold $TQ \times \mathbb{R}$  are  $(\mathbf{q},\dot{\mathbf{q}},s)$. 

\textbf{Herglotz's Action and Herglotz's Equation.}
Consider two points $\mathbf{q}_0$ and $\mathbf{q}_T$ on $Q$, a Lagrangian $L$ on the extended tangent bundle $TQ \times \mathbb{R}$ and the following initial value problem  
\begin{equation}\label{Cauchy}
\dv{s}{t} =L\Big(\mathbf{q}, \dv{\mathbf{q}}{t}, s\Big) , \qquad  s(0)  =s_{0},
\end{equation}
where $\mathbf{q}=\mathbf{q}(t)$ is the curve containing $\mathbf{q}_0$ and $\mathbf{q}_T$ as initial and final points, i.e., $\mathbf{q}(0)=\mathbf{q}_0$ and $\mathbf{q}(T)=\mathbf{q}_T$, with $0\leq t \leq T$. Evidently, the Cauchy problem \eqref{Cauchy} will be different for different curves $\mathbf{q}(t)$. So that a solution $s=s(t)$ of the problem depends on the curve $\mathbf{q}(t)$ substituted in the Lagrangian function. 

Let $\mathcal{C}^\infty(Q)$ be the space of all smooth curves on $Q$ joining $\mathbf{q}_0$ and $\mathbf{q}_T$. This space depends on the end points but we omit this fact in the notation. $\mathcal{C}^\infty(Q)$ is an infinite dimensional manifold. According to the Cauchy problem in \eqref{Cauchy}, for every initial value $s(0)=s_0$, we determine the Herglotz action as a map from the product space $\mathcal{C}^\infty(Q)\times \mathbb{R}$ to the real numbers as follows
\begin{equation}
\mathcal{C}^\infty(Q)\times \mathbb{R}\rightarrow \mathbb{R},\qquad (\mathbf{q},s_0)\mapsto  s(T)-s(0)=\int_{0}^{T} \dv{s}{t} d t=\int_{0}^{T} L \left(\mathbf{q}, \dv{\mathbf{q}}{t}  , s \right) d t.
\end{equation}
The extreme values of the action are the curves $\mathbf{q}=\mathbf{q}(t)$ solving the Herglotz's (generalized Euler-Lagrange) equations \cite{Herglotz,herg2,Guenther}
\begin{equation}\label{eq:herglotz}
    \dv{}{t} L_{\dot{\mathbf{q}}} - L_{\mathbf{q}}= L_sL_{\dot{\mathbf{q}}}.
\end{equation}

\subsection{Contact Hamiltonian Dynamics}

Consider a contact manifold $(M,\eta)$. For a Hamiltonian function $H$, the contact Hamiltonian vector field is defined in terms of the contact one-form $\eta$ as
\begin{equation}
\iota_{X_{H}}\eta =-H,\qquad \iota_{X_{H}}\dd\eta =\dd H-\mathcal{R}(H) \eta,   \label{contact}
\end{equation}%
where $\mathcal{R}$ is the Reeb vector field.  A direct computation determines that a Hamiltonian flow does not preserve the contact one-form
\begin{equation}\label{L-X-eta}
\mathcal{L}_{X_{H}}\eta =
\dd\iota_{X_{H}}\eta+\iota_{X_{H}}\dd\eta= -\mathcal{R}(H)\eta.
\end{equation}
Notice that, according to \eqref{L-X-eta}, the flow of a contact Hamiltonian system preserves the contact structure  which is defined to be the kernel of the contact one-form. $X_{H}$ does not preserve the Hamiltonian function, nor the volume form $\dd\eta^n
\wedge \eta$. Instead, we obtain
\begin{equation}
{\mathcal{L}}_{X_H} \, H = - \mathcal{R}(H) H, \qquad {\mathcal{L}}_{X_H}  \, (\dd \eta^n
\wedge \eta) = - (n+1)  \mathcal{R}(H) \dd \eta^n
\wedge \eta.
\end{equation}
As manifested previously, all contact manifolds locally resemble the extended cotangent bundle $T^*Q\times \mathbb{R}$. In this local view $(\mathbf{q}, \mathbf{p},s)$, the Hamiltonian vector field turns out to be
\begin{equation}\label{hcont2}
X_H = H_\mathbf{p}\cdot \nabla_ \mathbf{q} - \big(H_\mathbf{q}+\mathbf{p}H_s \big )\cdot \nabla_\mathbf{p}+
\big(\mathbf{p}\cdot H_\mathbf{p}-H \big )\nabla_s
\end{equation}
Thus, an integral curve $(\mathbf{q}(t), \mathbf{p}(t), s(t))$ of $X_H$ satisfies the 
contact Hamilton equations 
\begin{equation}\label{hcont3} 
\dv{\mathbf{q}}{t}  =  H_\mathbf{p} , \qquad 
\dv{\mathbf{p}}{t}  =  - \big(H_\mathbf{q}+\mathbf{p}H_s \big ), \qquad 
\dv{s}{t}   = \mathbf{p} \cdot H_\mathbf{p}-H. 
\end{equation}

\textbf{The Legendre Transformation.} The fiber derivative $\mathbb{F} L$ of a Lagrangian function $L$ determines a mapping from the extended tangent bundle to the extended cotangent bundle as 
\begin{equation} 
		\mathbb{F} L:TQ \times \mathbb{R}  \longrightarrow T^*Q \times \mathbb{R}, \qquad (\mathbf{q},\dot{\mathbf{q}},s)\mapsto (\mathbf{q},L_{\dot{\mathbf{q}}},s). 
\end{equation}
See that, for a regular Lagrangian,  $\mathbb{F} L$ is a contactomorphism from the contact manifold $TM\times \mathbb{R} $ equipped with the contact one-form $\eta_L=(\mathbb{F} L)^*\eta$ to the contact manifold $T^*Q \times \mathbb{R}$ equipped with $\eta$ in \eqref{eq:cotangent_contact_structure}. In this situation, we can define the corresponding Hamiltonian function as 
\begin{equation} 
H:T^*Q\times \mathbb{R} \to \mathbb{R},\qquad H=\dot{\mathbf{q}}\cdot L_{\dot{\mathbf{q}}}- L.
\end{equation}
A direct computation proves that the contact Hamilton equations are the generalized Euler-Lagrange equations, they coincide for regular Lagrangian functions. 
For the Legendre transformation of nonregular cases, we cite \cite{esen2021contact,de2019singular,LeonLainz2}.

\subsection{A Continuous Geometric Hamilton-Jacobi Theory on Contact Manifolds}\label{contact:hamiltonian}

We start with the fibration $Q\mapsto \mathbb{R}$ whose first jet bundle is the extended cotangent bundle $T^*Q \times \mathbb{R}$. In this case, the fibration is given by the target map 
\begin{equation}\label{rho}
\rho: T^{*}Q \times \mathbb{R}\longrightarrow Q ,\qquad (\mathbf{z},s)\mapsto \pi_{Q}(\mathbf{z}),
\end{equation}
where $\pi_{Q}$ is the cotangent bundle projection. 
The first prolongation of a smooth function $F$ on $Q$ is a section of the projection $\rho$. We write the first prolongation as 
\begin{equation}\label{gamma}
\gamma =\rm{J}^1  F :Q\longrightarrow T^{*}Q \times \mathbb{R},\qquad \mathbf{q}\mapsto (\mathbf{q}, F_\mathbf{q},F(\mathbf{q})). 
\end{equation}
Notice that $T^*Q \times \mathbb{R}$ is a contact manifold and the image space of the first prolongation $\gamma$ is a Legendrian submanifold of this space. 
 
Consider a section $\gamma$ in form \eqref{gamma} and a contact Hamiltonian vector field $X_H$ given as in \eqref{hcont2}. We define a vector field \begin{equation}\label{X-gamma-con}
X_{H}^{\gamma }:=T\rho\circ X_H \circ \gamma ,
\end{equation}
on the base manifold $Q$ according to the commutativity of the following diagram
\begin{equation}
\xymatrix{   T^{*}Q \times \mathbb{R}
\ar[dd]^{\rho} \ar[rrr]^{X_H}&   & &T( T^{*}Q \times \mathbb{R})\ar[dd]_{T{\rho}}\\
  &  & &\\
Q  \ar@/^2pc/[uu]^{\gamma}\ar[rrr]^{X^{\gamma}_H}&  & & T Q \ar@/_2pc/[uu]_{T\gamma}}
\end{equation}
where $T{\rho}$ is the tangent mapping of the fibration $\rho$ in \eqref{rho}. Next, we state a geometric Hamilton-Jacobi theorem for contact Hamiltonian dynamics.  

\begin{theorem} \label{HJT-Con}
For the first prolongation $\gamma$ of a function $F$ on $Q$, the following two conditions are equivalent:
\begin{enumerate}
\item The vector fields $X_{H}$ and $X^{\gamma}_H$ are $\gamma$-related, that is
\begin{equation}\label{HJT-Con-1-eq-1--}
T\gamma \circ  X_H^{\gamma}  = X_H\circ \gamma 
\end{equation}
where $T\gamma:TQ\mapsto T(T^{*}Q\times \mathbb{R})$ is the tangent mapping of  the section $\gamma$. 
\item The equation 
\begin{equation}\label{HJT-Con-1-eq-2--}
d(H\circ \gamma) = 0
\end{equation}
is fulfilled. 
\end{enumerate}
\end{theorem}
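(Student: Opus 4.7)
My plan is to exploit that $\mathrm{im}(\gamma)$ is Legendrian — automatic for first prolongations, since $\gamma^{*}\eta = \dd F - F_{\mathbf{q}} \cdot \dd\mathbf{q} = 0$ — and then to shuttle between \eqref{HJT-Con-1-eq-1--} and \eqref{HJT-Con-1-eq-2--} by pulling back the two defining relations \eqref{contact} of $X_H$ along $\gamma$. The workhorse identity is $\gamma^{*}(\iota_{X_H}\alpha) = \iota_{X_H^{\gamma}}(\gamma^{*}\alpha)$, valid exactly when $X_H$ and $X_H^{\gamma}$ are $\gamma$-related; it moves computations from $T^{*}Q \times \mathbb{R}$ down to the base $Q$ where $H \circ \gamma$ lives.

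\textbf{From (1) to (2).} Assuming $\gamma$-relatedness, I would combine $\gamma^{*}\eta = 0$ with $\gamma^{*}\dd\eta = \dd(\gamma^{*}\eta) = 0$, and then pull back $\iota_{X_H}\dd\eta = \dd H - \mathcal{R}(H)\eta$ to get
\[
0 \,=\, \iota_{X_H^{\gamma}}(\gamma^{*}\dd\eta) \,=\, \dd(H \circ \gamma) - (\mathcal{R}(H) \circ \gamma)\,\gamma^{*}\eta \,=\, \dd(H \circ \gamma),
\]
which is \eqref{HJT-Con-1-eq-2--}. A parallel pullback of $\iota_{X_H}\eta = -H$ produces, as a byproduct, the useful identity $H \circ \gamma = 0$.

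\textbf{From (2) to (1).} For the converse I would switch to Darboux coordinates. Writing $\gamma(\mathbf{q}) = (\mathbf{q}, F_{\mathbf{q}}, F(\mathbf{q}))$, the definition of $X_H^\gamma$ gives $X_H^{\gamma} = H_{\mathbf{p}}|_{\gamma} \cdot \nabla_{\mathbf{q}}$, while the tangent map reads $T\gamma(v^{i}\partial_{q^{i}}) = v^{i}\bigl(\partial_{q^{i}} + F_{q^{i}q^{j}}\partial_{p_{j}} + F_{q^{i}}\partial_{s}\bigr)$. A componentwise comparison of $T\gamma \circ X_H^{\gamma}$ with $X_H \circ \gamma$, using \eqref{hcont2}, shows: the $\mathbf{q}$-parts agree tautologically; the $\mathbf{p}$-parts agree iff $H_{q^{j}} + H_{p_{i}}F_{q^{i}q^{j}} + H_{s}F_{q^{j}} = 0$, which is exactly $\partial_{q^{j}}(H \circ \gamma) = 0$, i.e.\ (2); and the $s$-parts agree iff $H \circ \gamma = 0$. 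Since (2) renders $H \circ \gamma$ constant on connected $Q$, fixing that constant to zero (the natural normalization built into the contact HJ formulation) closes the loop.

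\textbf{Main obstacle.} The delicate point is the $s$-coefficient matching in the converse: it forces $H \circ \gamma = 0$, not merely $\dd(H \circ \gamma) = 0$. This is the genuinely new feature of the contact version compared with the symplectic HJ theorem of \cite{carinena2006geometric}, and it is precisely what the Legendrian — rather than merely Lagrangian — character of $\mathrm{im}(\gamma)$ encodes through the first defining relation $\iota_{X_H}\eta = -H$.
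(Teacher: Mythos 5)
Your argument is correct, and for the equivalence itself it lands on the same computation as the paper: the paper proves both implications by precisely the componentwise comparison of $T\gamma\circ X_H^{\gamma}$ with $X_H\circ\gamma$ in Darboux coordinates that you perform in your converse step, concluding that the $\mathbf{p}$-parts match iff $F_{\mathbf{q}\mathbf{q}}H_{\mathbf{p}}+H_{\mathbf{q}}+H_{s}F_{\mathbf{q}}=0$ on $\mathrm{im}(\gamma)$, i.e.\ $\dd(H\circ\gamma)=0$, and that the $s$-parts match iff $H\circ\gamma=0$. Where you genuinely diverge is the direction $(1)\Rightarrow(2)$: your intrinsic route --- observing $\gamma^{*}\eta=0$ and pulling back the two defining relations \eqref{contact} through the $\gamma$-relatedness identity $\gamma^{*}(\iota_{X_H}\alpha)=\iota_{X_H^{\gamma}}(\gamma^{*}\alpha)$ --- is coordinate-free and makes visible \emph{why} the Legendrian character of $\mathrm{im}(\gamma)$ forces both $\dd(H\circ\gamma)=0$ and $H\circ\gamma=0$, whereas the paper reads these off by inspection of the fifth and sixth entries of the local expressions. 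That is a gain in conceptual clarity at no cost in rigour.

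One caveat, which you have in fact spotted more explicitly than the paper does. The coefficient matching shows that condition (1) is equivalent to $H\circ\gamma=0$, which is strictly stronger than \eqref{HJT-Con-1-eq-2--} as written; the paper's proof reaches the same sixth-entry conclusion and simply declares the vanishing of $H$ on $\mathrm{im}(\gamma)$ to be ``the second condition.'' Your resolution --- that \eqref{HJT-Con-1-eq-2--} makes $H\circ\gamma$ constant on connected $Q$ and one then normalizes the constant to zero --- is the honest reading, but note that this normalization cannot be absorbed into $F$: replacing $F$ by $F+c$ changes $\gamma$ and hence $H\circ\gamma$, since $H$ depends on $s$. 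So, taken literally, $(2)\Rightarrow(1)$ needs the additional hypothesis that $H\circ\gamma$ vanishes at one point; making that explicit is a feature of your write-up rather than a defect, and it matches what the paper's own computation actually establishes.
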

\begin{proof} We prove this theorem in local coordinates. The restriction of the Hamiltonian vector field $X_{H}$ to the image of $\gamma$ is computed to be 
  \begin{equation}\label{Proof-2-}
   \begin{split}
 X_H\circ \gamma (\mathbf{q})&=\Big(\mathbf{q},F_{\mathbf{q}},F(\mathbf{q}) ;H_{\mathbf{p}}\Big\vert_{im (\gamma)},-H_{\mathbf{q}}\Big\vert_{im (\gamma)}
  -H_{s}F_{\mathbf{q}} \Big\vert_{im (\gamma)}, F_{\mathbf{q}} \cdot H_{\mathbf{q}}\Big\vert_{im (\gamma)} -  H\Big\vert_{im (\gamma)}
  \Big). 
   \end{split}
    \end{equation}
Using $T\rho$, we map this vector field down to the tangent bundle of $Q$. This is the projected  vector field \eqref{X-gamma-con} that is
 \begin{equation}
X^{\gamma}_H(\mathbf{q})=   H_{\mathbf{p}}\Big\vert_{im (\gamma)}\cdot \nabla_{\mathbf{q}}.
   \end{equation} 
 On the other hand, the tangent mapping of the section $\gamma$ is computed to be  
 \begin{equation}\label{T-gamma}
 T\gamma(\mathbf{q};\mathbf{\dot{q}})=\big (\mathbf{q},F_{\mathbf{q}}(\mathbf{q}),F(\mathbf{q});\mathbf{\dot{q}},F_{\mathbf{qq}}(\mathbf{q})\mathbf{\dot{q}},F_{\mathbf{q}}(\mathbf{q})\cdot \mathbf{\dot{q}} \big),
 \end{equation}
 where the notation $F_{\mathbf{qq}}(\mathbf{q})\mathbf{\dot{q}}$ stands for the multiplication of the Hessian matrix $F_{\mathbf{qq}}$ with the column vector $\mathbf{\dot{q}}$. 
Accordingly, the left hand side of the equation \eqref{HJT-Con-1-eq-1--} is computed to be
 \begin{equation}\label{Proof-1-}
 \begin{split}
  T\gamma \circ  X^{\gamma}_H(\mathbf{q})&=\big(\mathbf{q},F_{\mathbf{q}}(\mathbf{q}),F(\mathbf{q});H_{\mathbf{p}}\Big\vert_{im (\gamma)},F_{\mathbf{qq}}H_{\mathbf{p}}\Big\vert_{im (\gamma)},F_{\mathbf{q}} \cdot H_{\mathbf{q}}\Big\vert_{im (\gamma)} \big).
 \end{split}
  \end{equation}
To satisfy \eqref{HJT-Con-1-eq-1--}, the expressions \eqref{Proof-2-} and \eqref{Proof-1-} must be same. The first three entries of these local realizations are the same. The first of the fiber variables (that coincides with the fourth entries) are also the same. See that the fifth and the sixth entries of  \eqref{Proof-2-} and \eqref{Proof-1-}  are not equal. The fifth entries are equal if and only if 
 \begin{equation}\label{evo-con-HJ-expli-}
F_{\mathbf{qq}}H_{\mathbf{p}}\Big\vert_{im (\gamma)}+
H_{\mathbf{q}}\Big\vert_{im (\gamma)}
  + H_{s}F_{\mathbf{q}}\Big\vert_{im (\gamma)}=0.
     \end{equation}
     See that, the identity \eqref{evo-con-HJ-expli-} can be compactly written as $d(H\circ \gamma)=0$. This gives that $H$ turns out to be a constant if it is restricted to the image space of the first prolongation $\gamma$. On the other hand, the sixth entries are the same if and only if $H$ is not only constant but further it vanishes on the image space of the first prolongation $\gamma$. This is the second condition \eqref{HJT-Con-1-eq-2--}. The inverse of the assertion is proved by reversing the arguments. This completes the proof.
\end{proof}

There exist alternative versions of the geometric HJ theory for contact manifolds where the base space is considered to be the extended configuration space $Q\times \mathbb{R}$. To check these versions, we refer to \cite{esen2021implicit,LeonLainz4,LeonSar3}.

\section{Discrete Contact Dynamics}\label{DCD-Section}

In this section, we first recall discrete Lagrangian and Hamiltonian dynamics on the contact framework. The definitions and the approach we adopt are the ones displayed in \cite{Bravetti1,Bravetti2,Simoes21,Simoes2020b}. We also wish to cite  \cite{Vermeeren}. Then we introduce the HJ theory for discrete Hamiltonian dynamics. This is the contact version of the approach performed in \cite{OhsawaBlochLeok} for the case of symplectic manifolds. In accordance with this, we state a geometric Hamilton-Jacobi theory. This is a contact generalization of the one presented in \cite{LeonSar3}.

\subsection{Discrete Contact Lagrangian Dynamics}
In the framework of discrete dynamics the role of differentiable curves in a configuration manifold $Q$ is played by the finite sequences of points in $Q$. Therefore, we are interested in the space of sequences consisting of $N+1$ points, denoted by $\mathcal{C}^N(Q)$. It is interesting to note that $\mathcal{C}^N(Q)$ is a manifold isomorphic to $N+1$ number of copies of $Q$, denoted by $Q^{N+1}$. We denote such a sequence by a set $[\mathbf{q}]=\{\mathbf{q}_0,\dots, \mathbf{q}_N\}$ which is a point in $Q^{N+1}$.
Here, $\mathbf{q}_0$ is the initial point of the sequence, whereas $\mathbf{q}_N$ is the end point. We use the subindex $\mathbf{q}_\kappa$ to denote the location of the point in the sequence from $\kappa=0$ to $\kappa=N$. We call $[\mathbf{q}]$ a discrete curve. 

\textbf{Herglotz's Action and Herglotz's Equation.} A discrete Lagrangian in this framework is a real valued function $L_d$ defined on the product space $Q\times Q\times \mathbb{R}$. Comparing with the continuous case, we see that in this geometry the role of the tangent bundle $TQ$ is played by the pair $Q\times Q$. More technically, the discrete Lagrangian $L_d$ is an approximation of the exact contact discrete Lagrangian $L_{d}^{ex}$ which is defined to be
\begin{equation}
L_{d}^{ex}(\mathbf{q}_\kappa,\mathbf{q}_{\kappa+1},s_\kappa)=\int_{t_\kappa}^{t_{\kappa+1}}{L\left(\mathbf{q}, \dv{\mathbf{q}}{t}, s\right)dt}
\end{equation}
where $\mathbf{q}$ is a solution of the continuous generalized Euler-Lagrange equation \eqref{eq:herglotz} with boundary conditions $\mathbf{q}(t_\kappa)=\mathbf{q}_\kappa$ and $\mathbf{q}(t_{\kappa+1})=\mathbf{q}_{\kappa+1}$.
To each discrete curve $[\mathbf{q}]$  and for a fixed initial value point $s_0$, we introduce the discrete Cauchy problem as
\begin{equation}\label{Dis-Cauchy}
 s_{\kappa}-s_{\kappa-1}=L_d(\mathbf{q}_{\kappa-1},\mathbf{q}_{\kappa},s_{\kappa-1}),\qquad s_{0}= s(0),
\end{equation}
where $\kappa$ runs from $1$ to $N$. 
%PROBLEMATIC- BEGIN
%See that, a solution to this problem is a sequence $[s]=\{s_0,\dots,s_N\}$ on $\mathbb{R}$. 
%Notice that $\mathbf{q}_\kappa$ is the $(\kappa+1)$th point of the discrete curve. In other words, for each point $\mathbf{q}_\kappa$ in $ [\mathbf{q}]$ we associate a point in $s_\kappa$ in $ [s]$. It is needless to say that a solution $[s]$, which is subset of $\mathbb{R}$, to the discrete Cauchy problem depends on the discrete curve $[\mathbf{q}]$, which is subset of $Q$.  
%PROBLEMATIC- END
We determine the discrete Herglotz action as a map from the product space $\mathcal{C}^N(Q)\times \mathbb{R}$ to the real numbers that is
\begin{equation}\label{disc-action}
\mathcal{C}^N(Q) \times \mathbb{R}\longrightarrow \mathbb{R},\qquad ([\mathbf{q}],s_0)\mapsto  s_N-s_0=\sum_{\kappa=0}^{N-1}L_d(\mathbf{q}_\kappa,\mathbf{q}_{\kappa+1},s_\kappa).
\end{equation}
The extreme values of the discrete action \eqref{disc-action} are  curves solving the discrete Herglotz's (generalized discrete Euler-Lagrange) equations
\begin{equation}\label{DHE}
\begin{split}
& D_1L_d(\mathbf{q}_\kappa,\mathbf{q}_{\kappa+1},s_\kappa)+\big(1+D_3 L_d(\mathbf{q}_\kappa,\mathbf{q}_{\kappa+1},s_\kappa)\big)D_2 L_d(\mathbf{q}_{\kappa-1},\mathbf{q}_\kappa,s_{\kappa-1})=0
\\
& s_{\kappa}-s_{\kappa-1}=L_d(\mathbf{q}_{\kappa-1},\mathbf{q}_{\kappa},s_{\kappa-1}),\qquad s_{0}= s(0),
 \end{split}
\end{equation}
provided that $1+D_3 L_d\neq 0$. Here, $D_{i}L_d$ refers to the partial derivative of the discrete Lagrangian with respect to its $i$-th entry.

\noindent

\textbf{The Discrete Legendre Transformation.} Assume a discrete Lagrangian function $L_d$ satisfying the condition $1+D_{3}L_{d}\neq 0$. We define the following Legendre transformations from the extended discrete space $Q\times Q  \times \mathbb{R}$ to the extended cotangent bundle $T^*Q \times \mathbb{R}$ as  
\begin{equation}\label{disccontleg}
    \begin{split}
         \mathbb{F}L_{d}^{+}&: Q\times Q  \times \mathbb{R} \longrightarrow T^*Q \times \mathbb{R},
         \\ &\hspace{2cm} (\mathbf{q}_\kappa,\mathbf{q}_{\kappa+1},s_\kappa)\mapsto \big(\mathbf{q}_{\kappa+1},D_{2}L_{d}(\mathbf{q}_\kappa,\mathbf{q}_{\kappa+1},s_\kappa),s_{\kappa}+L_{d}(\mathbf{q}_\kappa,\mathbf{q}_{\kappa+1},s_\kappa)\big)\\
                \mathbb{F}L_{d}^{-}&: Q\times Q  \times \mathbb{R} \longrightarrow T^*Q \times \mathbb{R},
         \\ &\hspace{2cm}
         (\mathbf{q}_\kappa,\mathbf{q}_{\kappa+1},s_\kappa)\mapsto \left(\mathbf{q}_\kappa,-\frac{D_{1}L_{d}(\mathbf{q}_\kappa,\mathbf{q}_{\kappa+1},s_\kappa)}{1+D_{3}L_{d}(\mathbf{q}_\kappa,\mathbf{q}_{\kappa+1},s_\kappa)},s_{\kappa} \right).
    \end{split}
\end{equation}
So that we have two alternative definitions of the momenta given as
\begin{equation}
\mathbf{p}^+_{\kappa-1,\kappa,\kappa}=D_{2}L_{d}(\mathbf{q}_{\kappa-1},\mathbf{q}_{\kappa},s_\kappa), \qquad \mathbf{p}^-_{\kappa,\kappa+1,\kappa}=-\frac{D_{1}L_{d}(\mathbf{q}_\kappa,\mathbf{q}_{\kappa+1},s_\kappa)}{1+D_{3}L_{d}(\mathbf{q}_\kappa,\mathbf{q}_{\kappa+1},s_\kappa)}.
\end{equation} 
In the light of the discrete Herglotz equations \eqref{DHE}, one can directly establish the momentum matching equation
\begin{equation} 
\mathbb{F}L_{d}^{+}(\mathbf{q}_\kappa,\mathbf{q}_{\kappa+1},s_\kappa)=\mathbb{F}L_{d}^{-}(\mathbf{q}_{\kappa+1},\mathbf{q}_{\kappa+2},s_{\kappa+1}).
\end{equation}  
So that, in this case, one can define the momenta 
\begin{equation} 
\mathbf{p}_{\kappa}:=\mathbf{p}^+_{\kappa-1,\kappa,\kappa}=\mathbf{p}^-_{\kappa,\kappa+1,\kappa+1}.
\end{equation}  
A discrete Lagrangian $L_d$ is called regular if the Hessian matrix obtained by taking the partial derivatives of $L_d$ is nondegenerate, that is 
\begin{equation}\label{reg-cond}
\det[D_2D_2L_d(\mathbf{q}_{\kappa-1},\mathbf{q}_{\kappa},s_\kappa)]\neq 0.
\end{equation}
Equivalently, a discrete Lagrangian is a regular Lagrangian if the Legendre transformations in \eqref{disccontleg} become invertible.  

\textbf{Diffeomorphisms.} Consider a regular discrete Lagrangian function $L_d$, referring to the discrete Legendre mappings in \eqref{disccontleg}, one can define a local diffeomorphism on the extended discrete space as 
\begin{equation}\label{Phi}
    \begin{split}
\Phi &: Q\times Q  \times \mathbb{R} \longrightarrow Q\times Q  \times \mathbb{R}, 
         \\ &\hspace{2cm}(\mathbf{q}_\kappa,\mathbf{q}_{\kappa+1},s_\kappa)\mapsto (\mathbf{q}_{\kappa+1},\mathbf{q}_{\kappa+2},s_{\kappa+1}):=(\mathbb{F}L_{d}^{-})^{-1}\circ  \mathbb{F}L_{d}^{+} (\mathbf{q}_\kappa,\mathbf{q}_{\kappa+1},s_\kappa).
             \end{split}
\end{equation}
Similarly, one can define a diffeomorphism on the extended cotangent bundle as follows 
\begin{equation}\label{tilde-Phi}
\tilde{\Phi}:= \mathbb{F}L_{d}^{+} \circ (\mathbb{F}L_{d}^{-})^{-1}.
\end{equation}
A direct comparison shows that the diffeomorphism $\Phi$ given in \eqref{Phi} and  the diffeomorphism $\tilde{\Phi}$ given in \eqref{tilde-Phi} are related by the Legendre transformations 
\begin{equation}
\tilde{\Phi}= \mathbb{F}L_{d}^{+} \circ \Phi \circ (\mathbb{F}L_{d}^{+})^{-1}, \qquad 
\tilde{\Phi}= \mathbb{F}L_{d}^{-} \circ \Phi \circ (\mathbb{F}L_{d}^{-})^{-1}.
\end{equation}
 In order to be more specific about the diffeomorphism we plot the following commutative diagram. 
 \begin{equation}\label{flows}
{\begin{footnotesize}
\xymatrix{ &(\mathbf{q}_{\kappa},\mathbf{q}_{\kappa+1},s_\kappa)
\ar[ddl]_{\mathbb{F}L_{d}^{-}} \ar[ddr]^{\mathbb{F}L_{d}^{+}}\ar[rr]^{\Phi}&  & (\mathbf{q}_{\kappa+1},\mathbf{q}_{\kappa+2},s_{\kappa+1})\ar[ddl]_{\mathbb{F}L_{d}^{-}} \ar[ddr]^{\mathbb{F}L_{d}^{+}} &  \\ \\
 (\mathbf{q}_{\kappa},\mathbf{p}_\kappa,s_\kappa)\ar[rr]_{\tilde{\Phi}}& & (\mathbf{q}_{\kappa+1},\mathbf{p}_{\kappa+1},s_{\kappa+1})\ar[rr]_{\tilde{\Phi}}& & (\mathbf{q}_{\kappa+2},\mathbf{p}_{\kappa+2},s_{\kappa+2})}
 \end{footnotesize}}
\end{equation}  
See that in this diagram in the bottom row we have the flow on the extended discrete cotangent bundle $T^*Q\times \mathbb{R}$,
whereas the top row corresponds with the flow on the extended discrete space $Q\times Q \times \mathbb{R}$. Here, the Legendre transformations establish the equivalency between the flows.

\subsection{Discrete Contact Hamiltonian Dynamics}

To derive Hamiltonian dynamics, we use that a discrete contact Lagrangian is essentially a generating function of type one \cite{Arnold-book} and that
we can apply the defined Legendre transformations to the discrete Lagrangian to find a discrete Hamiltonian \cite{Arnold-book,Goldstein-book}. We start with the definition of momentum $\mathbf{p}_{\kappa+1}$ and the local inversion operation 
\begin{equation}\label{rLT}
\mathbf{p}_{\kappa+1}=D_{2}L_{d}(\mathbf{q}_\kappa,\mathbf{q}_{\kappa+1},s_\kappa), \qquad \mathbf{q}_{\kappa+1}=\Psi(\mathbf{q}_\kappa,\mathbf{p}_{\kappa+1},s_\kappa),
\end{equation}
provided that the regularity condition holds, i.e., the operator $D_{2}D_{2}L_{d}$ does not vanish. 
Referring to these local identifications, we introduce the right discrete Hamiltonian function on the extended cotangent bundle as
\begin{equation}\label{rdh}
 H^{+}_d(\mathbf{q}_\kappa,\mathbf{p}_{\kappa+1},s_\kappa)= \mathbf{p}_{\kappa+1}\cdot \mathbf{q}_{\kappa+1} -L_{d}(\mathbf{q}_\kappa,\mathbf{q}_{\kappa+1},s_\kappa).
\end{equation}
By taking the partial derivatives of the Hamiltonian function $ H^{+}_d$ with respect to its arguments (applying the chain rule referring to \eqref{rLT}), we arrive at the following expressions 
\begin{equation}\label{L-H}
\begin{split}
D_1H^{+}_d(\mathbf{q}_\kappa,\mathbf{p}_{\kappa+1},s_\kappa)&=
-D_1L_{d}(\mathbf{q}_\kappa,\mathbf{q}_{\kappa+1},s_\kappa)\\
D_2H^{+}_d(\mathbf{q}_\kappa,\mathbf{p}_{\kappa+1},s_\kappa)&=\mathbf{q}_{\kappa+1}
\\
D_3H^{+}_d(\mathbf{q}_\kappa,\mathbf{p}_{\kappa+1},s_\kappa)&=
-D_3L_{d}(\mathbf{q}_\kappa,\mathbf{q}_{\kappa+1},s_\kappa).
\end{split}
\end{equation}
Notice that, these equations hold under the identifications \eqref{rLT}.
The second equation in the list determines $\mathbf{q}_{\kappa+1}$. We substitute the first and third identities in \eqref{L-H} into the discrete Euler-Lagrange equation \eqref{DHE}, and using \eqref{rLT}, we get
\begin{equation}
\mathbf{p}_{\kappa}=\frac{D_1H^{+}_d(\mathbf{q}_\kappa,\mathbf{p}_{\kappa+1},s_\kappa)}{1-D_3H^{+}_d(\mathbf{q}_\kappa,\mathbf{p}_{\kappa+1},s_\kappa)}.
\end{equation}
Considering the discrete Cauchy problem \eqref{Dis-Cauchy} as part of the set, we arrive at the following system of equations, which we call right discrete contact Hamilton equations, 
\begin{equation}\label{disconteq}
\begin{split}
\mathbf{q}_{\kappa+1}&=D_2H^{+}_d(\mathbf{q}_\kappa,\mathbf{p}_{\kappa+1},s_\kappa), 
\\
\mathbf{p}_{\kappa}&
= \frac{ D_1H^{+}_d(\mathbf{q}_\kappa,\mathbf{p}_{\kappa+1},s_\kappa)} {(1-D_3H^{+}_d(\mathbf{q}_\kappa,\mathbf{p}_{\kappa+1},s_\kappa))}, \\
 s_{\kappa+1}&=s_\kappa+\mathbf{p}_{\kappa+1}\cdot D_2H^+_d(\mathbf{q}_\kappa,\mathbf{p}_{\kappa+1},s_\kappa)-H^+_d(\mathbf{q}_\kappa,\mathbf{p}_{\kappa+1},s_\kappa).
 \end{split}
\end{equation} 
 
\textbf{Left Hamilton Equations.}
 The left discrete Hamiltonian function on the extended cotangent bundle is
\begin{equation}\label{ldh}
 H^{-}_d(\mathbf{p}_\kappa,\mathbf{q}_{\kappa+1},s_{\kappa+1})= -\mathbf{p}_{\kappa}\cdot \mathbf{q}_{\kappa} -L_{d}(\mathbf{q}_\kappa,\mathbf{q}_{\kappa+1},s_{\kappa+1}).
\end{equation}
By taking the partial derivatives of the Hamiltonian function $ H^{-}_d$, we arrive at the following expressions 
\begin{equation}\label{L-H2}
\begin{split}
D_1H^{-}_d(\mathbf{p}_\kappa,\mathbf{q}_{\kappa+1},s_{\kappa+1})&= -\mathbf{q}_\kappa\\
D_2H^{-}_d(\mathbf{p}_\kappa,\mathbf{q}_{\kappa+1},s_{\kappa+1})&=\mathbf{p}_{\kappa+1}
\\
D_3H^{+}_d(\mathbf{p}_\kappa,\mathbf{q}_{\kappa+1},s_{\kappa+1})&=-D_3L_{d}(\mathbf{q}_\kappa,\mathbf{q}_{\kappa+1},s_{\kappa+1}).
\end{split}
\end{equation}

The second equation in the list determines $\mathbf{p}_{\kappa+1}$. 
\begin{equation}
\mathbf{p}_{\kappa+1}=D_2H^{-}_d(\mathbf{p}_\kappa,\mathbf{q}_{\kappa+1},s_{\kappa+1})=-D_2L_d(\mathbf{q}_\kappa,\mathbf{q}_{\kappa+1},s_{\kappa+1}).
\end{equation}
Considering the discrete Cauchy problem \eqref{Dis-Cauchy} as part of the set, we arrive at the following system of equations, which we call the left discrete contact Hamilton equations, 
\begin{equation}\label{disconteqneg}
\begin{split}
&\mathbf{q}_\kappa=-D_1H^{-}_d(\mathbf{p}_\kappa,\mathbf{q}_{\kappa+1},s_{\kappa+1})\\
&\mathbf{p}_{\kappa+1}=D_2H^{-}_d(\mathbf{p}_\kappa,\mathbf{q}_{\kappa+1},s_{\kappa+1})
\\
& s_{\kappa}=s_{\kappa+1}-\mathbf{p}_{\kappa}\cdot D_1H^{-}_d(\mathbf{p}_\kappa,\mathbf{q}_{\kappa+1},s_{\kappa+1})+H^{-}_d(\mathbf{p}_\kappa,\mathbf{q}_{\kappa+1},s_{\kappa+1}) 
 \end{split}
\end{equation}

From now on, we shall focus on the right discrete dynamics, since everything can be reenacted in terms of left discrete approach straightforwardly.

\subsection{Discrete Contact Hamilton--Jacobi theory}\label{discrete-HJ-Section}

The discrete contact Hamilton--Jacobi equation can be derived in terms of a generating function of a coordinate transformation that trivializes the dynamics, as it is done in the classical continuous theory \cite{Goldstein-book}. For this we need a generating function to introduce a contact diffeomorphism from the contact manifold $T^*Q\times \mathbb{R}$ to the contact manifold $T^*\widehat{Q} \times \widehat{\mathbb{R}}$ that is 
\begin{equation}\label{trsformation}
 T^*Q\times \mathbb{R}\longrightarrow   T^*\widehat{Q} \times \widehat{\mathbb{R}},\qquad  (\mathbf{q}_\kappa,\mathbf{p}_\kappa,s_\kappa)\rightarrow (\widehat{\mathbf{q}}_\kappa,\widehat{\mathbf{p}}_\kappa,\widehat{s}_\kappa).
\end{equation} 
Recall from \eqref{Leg-trf} that, in Section \ref{Sec-Cont-Man}, we have obtained a generating function defined on the product space $\widehat{Q} \times Q \times \widehat{\mathbb{R}}$ realizing the transformation. The main result for the generating function is described in the next theorem, following the lines of \cite{OhsawaBlochLeok} in the symplectic case. 
\begin{theorem}\label{theorem-dHJ}
 Consider the right discrete contact Hamilton equations \eqref{disconteq} and a discrete phase space $(\widehat{\mathbf{q}}_\kappa,\mathbf{q}_\kappa,s_\kappa)$. Consider also a transformation \eqref{trsformation} 
 that satisfies:
 \begin{enumerate}
  \item The old and new coordinates are related by a generating function $S^\kappa:\widehat{Q}\times Q\times \widehat{\mathbb{R}}\rightarrow \mathbb{R}$
  of the type
  \begin{equation}\label{genfunction}
 \mathbf{p}_\kappa=D_2 S^\kappa(\widehat{\mathbf{q}}_\kappa,\mathbf{q}_\kappa,\widehat{s}_\kappa), \quad s_\kappa=\widehat{s}_\kappa+S^\kappa(\widehat{\mathbf{q}}_\kappa,\mathbf{q}_\kappa,\widehat{s}_\kappa),\quad
  \widehat{\mathbf{p}}_\kappa= - \frac{D_1S^\kappa(\widehat{\mathbf{q}}_\kappa,\mathbf{q}_\kappa,\widehat{s}_\kappa)}{1+D_3S^\kappa(\widehat{\mathbf{q}}_\kappa,\mathbf{q}_\kappa,\widehat{s}_\kappa)}.
  \end{equation}
\item The dynamics in the new coordinates $(\widehat{\mathbf{q}}_\kappa,\widehat{\mathbf{p}}_\kappa,\widehat{s}_\kappa)$ is rendered trivial, i.e., 
\begin{equation}\label{tri-disc}
(\widehat{\mathbf{q}}_\kappa,\widehat{\mathbf{p}}_\kappa,\widehat{s}_\kappa)=(\widehat{\mathbf{q}}_{\kappa+1},\widehat{\mathbf{p}}_{\kappa+1},\widehat{s}_{\kappa+1}).
\end{equation}
  \end{enumerate}
  Then, the set of functions $\{S^\kappa\}$ satisfies the right discrete contact Hamilton--Jacobi equation:
   \begin{equation}\label{firstdiscrete}
   \begin{split}
   S^{\kappa+1}(\widehat{\mathbf{q}}_{0},\mathbf{q}_{\kappa +1},\widehat{s}_{0})&-S^{\kappa}(\widehat{\mathbf{q}}_{0},\mathbf{q}_{\kappa},\widehat{s}_{0})-D_2 S^{\kappa+1}(\widehat{\mathbf{q}}_{0},\mathbf{q}_{\kappa+1},\widehat{s}_{0})\cdot \mathbf{q}_{\kappa+1} \\ &
   +H^+_d(\mathbf{q}_\kappa,D_2 S^{\kappa+1}(\widehat{\mathbf{q}}_0,\mathbf{q}_{\kappa+1},\widehat{s}_0),\widehat{s}_0+S^\kappa(\widehat{\mathbf{q}}_0,\mathbf{q}_\kappa,\widehat{s}_0))
   = 0.
   \end{split}
\end{equation}
\end{theorem}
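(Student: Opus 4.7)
The plan is to show that the third of the right discrete contact Hamilton equations in \eqref{disconteq} (the one updating $s_{\kappa+1}$) becomes exactly \eqref{firstdiscrete} once the generating function $S^\kappa$ is used to implement the change of coordinates \eqref{trsformation}. The trivialization hypothesis \eqref{tri-disc} is the crucial simplifier: it forces $\widehat{\mathbf{q}}_\kappa$ and $\widehat{s}_\kappa$ to be independent of $\kappa$, so they may be replaced by $\widehat{\mathbf{q}}_0$ and $\widehat{s}_0$ throughout. With that replacement, the arguments of $S^\kappa$ and $S^{\kappa+1}$ prescribed by \eqref{genfunction} coincide with those appearing in \eqref{firstdiscrete}.

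First I would combine the first and third equations of \eqref{disconteq}. Using $\mathbf{q}_{\kappa+1} = D_2 H^+_d(\mathbf{q}_\kappa, \mathbf{p}_{\kappa+1}, s_\kappa)$ to eliminate the $D_2 H^+_d$ factor in the $s$-update yields the discrete Legendre-type relation
\begin{equation}
s_{\kappa+1} - s_\kappa = \mathbf{p}_{\kappa+1} \cdot \mathbf{q}_{\kappa+1} - H^+_d(\mathbf{q}_\kappa, \mathbf{p}_{\kappa+1}, s_\kappa).
\end{equation}
Next I would read off from the second relation of \eqref{genfunction}, evaluated at both steps $\kappa$ and $\kappa+1$, that
\begin{equation}
s_{\kappa+1} - s_\kappa = S^{\kappa+1}(\widehat{\mathbf{q}}_0, \mathbf{q}_{\kappa+1}, \widehat{s}_0) - S^\kappa(\widehat{\mathbf{q}}_0, \mathbf{q}_\kappa, \widehat{s}_0),
\end{equation}
while the first relation of \eqref{genfunction} gives $\mathbf{p}_{\kappa+1} = D_2 S^{\kappa+1}(\widehat{\mathbf{q}}_0, \mathbf{q}_{\kappa+1}, \widehat{s}_0)$ and the second relation at step $\kappa$ supplies $s_\kappa = \widehat{s}_0 + S^\kappa(\widehat{\mathbf{q}}_0, \mathbf{q}_\kappa, \widehat{s}_0)$. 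Substituting these three expressions into the previous display and collecting terms delivers precisely \eqref{firstdiscrete}.

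The main difficulty is essentially bookkeeping rather than any analytic step: one must carefully track which index sits in the third slot of $H^+_d$ (namely $s_\kappa$, and hence $S^\kappa$, not $S^{\kappa+1}$), and verify that the third relation of \eqref{genfunction} for $\widehat{\mathbf{p}}_\kappa$ is compatible with the $\mathbf{p}_\kappa$-equation of \eqref{disconteq} under the trivialization. This compatibility is nothing other than the statement that \eqref{Leg-trf} produces a contact diffeomorphism at each step, and it is a direct check rather than a new theorem. Once the indexing is accounted for, the identity \eqref{firstdiscrete} follows by the chain of substitutions described above.
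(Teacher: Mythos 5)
Your proof is correct, but it follows a genuinely different and considerably more direct route than the paper's. The paper establishes the result by computing the total differential $\dd\widehat{H}^+_d$ of the trivialized Hamiltonian, comparing it with the differentials $\dd S^\kappa$ and $\dd S^{\kappa+1}$ of the generating functions, exploiting the conformal relation $\eta_\kappa=(1+D_3S^\kappa)\widehat{\eta}_\kappa$ between the contact forms, and finally concluding that the exterior derivative of the Hamilton--Jacobi expression vanishes. Your argument instead substitutes the generating-function relations \eqref{genfunction} directly into the combined first and third equations of \eqref{disconteq}: the identity $s_{\kappa+1}-s_\kappa=\mathbf{p}_{\kappa+1}\cdot\mathbf{q}_{\kappa+1}-H^+_d(\mathbf{q}_\kappa,\mathbf{p}_{\kappa+1},s_\kappa)$ together with $s_\kappa=\widehat{s}_0+S^\kappa(\widehat{\mathbf{q}}_0,\mathbf{q}_\kappa,\widehat{s}_0)$ and $\mathbf{p}_{\kappa+1}=D_2S^{\kappa+1}(\widehat{\mathbf{q}}_0,\mathbf{q}_{\kappa+1},\widehat{s}_0)$ (both legitimate uses of the trivialization \eqref{tri-disc}) yields \eqref{firstdiscrete} at once. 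What your approach buys is twofold: it is shorter, and it produces the equation itself rather than the vanishing of its differential --- the paper's final line only shows the left-hand side of \eqref{firstdiscrete} is (locally) constant, leaving an implicit normalization to conclude it is zero, a subtlety your substitution avoids entirely. What the paper's approach buys is structural information along the way (the conformal factor relating $\eta_\kappa$ and $\widehat{\eta}_\kappa$, and the invariance of $\widehat{\mathbf{q}}\cdot\widehat{\mathbf{p}}$), which motivates the geometric reformulation in the subsequent subsection. Your closing remark about checking compatibility of the $\widehat{\mathbf{p}}_\kappa$ relation with the second equation of \eqref{disconteq} is appropriate but not strictly needed for the implication as stated, since the conclusion only requires the $\mathbf{q}$-, $s$- and $\mathbf{p}_{\kappa+1}$-relations.
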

\begin{proof} We establish the proof in four steps. 

\textit{Step 1.} Consider the right discrete contact Hamilton equations \eqref{disconteq} in the new variables $(\widehat{\mathbf{q}}_\kappa,\widehat{\mathbf{p}}_\kappa,\widehat{s}_\kappa)$, that is
\begin{equation}\label{disconteq2}
   \begin{split}
\widehat{\mathbf{q}}_{\kappa+1}&=D_2\widehat{H}^{+}_d(\widehat{\mathbf{q}}_\kappa,\widehat{\mathbf{p}}_{\kappa+1},\widehat{s}_\kappa), \\ \widehat{\mathbf{p}}_{\kappa}&
= D_1\widehat{H}^{+}_d(\widehat{\mathbf{q}}_\kappa,\widehat{\mathbf{p}}_{\kappa+1},\widehat{s}_\kappa)/\big(1-D_3\widehat{H}^{+}_d(\widehat{\mathbf{q}}_\kappa,\widehat{\mathbf{p}}_{\kappa+1},\widehat{s}_\kappa)\big), \\
 \widehat{s}_{\kappa+1}&=\widehat{s}_\kappa+\widehat{\mathbf{p}}_{\kappa+1}\cdot D_2\widehat{H}^+_d(\widehat{\mathbf{q}}_\kappa,\widehat{\mathbf{p}}_{\kappa+1},\widehat{s}_\kappa)-\widehat{H}^+_d(\widehat{\mathbf{q}}_\kappa,\widehat{\mathbf{p}}_{\kappa+1},\widehat{s}_\kappa).
 \end{split}
\end{equation}
These equations can be recast in the form of a total derivative of the right discrete contact Hamiltonian in the following way
\begin{equation}\label{totalderivativeham}
   \begin{split}
    \dd{\widehat{H}^{+}_{d}}(\widehat{\mathbf{q}}_\kappa,\widehat{\mathbf{p}}_{\kappa+1},\widehat{s}_\kappa) &=D_1{\widehat{H}^{+}_{d}}(\widehat{\mathbf{q}}_\kappa,\widehat{\mathbf{p}}_{\kappa+1},\widehat{s}_\kappa)\cdot  \dd \widehat{\mathbf{q}}_\kappa+D_2{\widehat{H}^{+}_{d}}(\widehat{\mathbf{q}}_\kappa,\widehat{\mathbf{p}}_{\kappa+1},\widehat{s}_\kappa)\cdot  \dd \widehat{\mathbf{p}}_{\kappa+1}
    \\
    &\hspace{6cm} +D_3{\widehat{H}^{+}_{d}}(\widehat{\mathbf{q}}_\kappa,\widehat{\mathbf{p}}_{\kappa+1},\widehat{s}_\kappa) \dd \widehat{s_\kappa}
         \\&=\widehat{\mathbf{p}}_\kappa\cdot \dd \widehat{\mathbf{q}}_\kappa+\widehat{\mathbf{q}}_{\kappa+1}\cdot \dd \widehat{\mathbf{p}}_{\kappa+1}+D_3{\widehat{H}^{+}_{d}}(\widehat{\mathbf{q}}_\kappa,\widehat{\mathbf{p}}_{\kappa+1},\widehat{s}_\kappa) \widehat{\eta}_\kappa
     \end{split}
\end{equation}
where we have employed \eqref{disconteq2} in the total derivative. More explicitly we have replaced $D_1{\widehat{H}^{+}_{d}}$ from the second equation in \eqref{disconteq2}.  Notice that we have used the following notation for the discrete contact form
\begin{equation}
\widehat{\eta}_\kappa=
\dd \widehat{s}_\kappa-\widehat{\mathbf{p}}_{\kappa}\cdot \dd \widehat{\mathbf{q}}_\kappa. 
\end{equation}  

\textit{Step 2.} Start with the generating function $S=S^\kappa(\widehat{\mathbf{q}}_\kappa,\mathbf{q}_\kappa,\widehat{s}_\kappa)$ satisfying  \eqref{genfunction}. The total derivative of the generating function as
\begin{equation}\label{totalderivativeS}
  \begin{split}
 \dd S^\kappa(\widehat{\mathbf{q}}_\kappa,\mathbf{q}_\kappa,\widehat{s}_\kappa)&=D_1S^\kappa (\widehat{\mathbf{q}}_\kappa,\mathbf{q}_\kappa,\widehat{s}_\kappa) \cdot \dd \widehat{\mathbf{q}}_\kappa+D_2S^\kappa (\widehat{\mathbf{q}}_\kappa,\mathbf{q}_\kappa,\widehat{s}_\kappa) \cdot d\mathbf{q}_\kappa+D_3S^\kappa (\widehat{\mathbf{q}}_\kappa,\mathbf{q}_\kappa,\widehat{s}_\kappa) \dd \widehat{s}_\kappa
 \\
 &=-\widehat{\mathbf{p}}_{\kappa}\cdot \dd \widehat{\mathbf{q}}_\kappa+\mathbf{p}_{\kappa}\cdot d\mathbf{q}_\kappa+D_3S^\kappa(\widehat{\mathbf{q}}_\kappa,\mathbf{q}_\kappa,\widehat{s}_\kappa) \widehat{\eta}_\kappa
   \end{split}
 \end{equation}
 in which we can introduce $D_1S^k$ and $D_2S^k$ from \eqref{genfunction}.
Recursively, we can write the expression for $\dd S^{\kappa+1}$, as:
 \begin{equation}\label{Sj1}
     \dd S^{\kappa +1} (\widehat{\mathbf{q}}_{\kappa +1},\mathbf{q}_{\kappa +1},\widehat{s}_{\kappa +1})=-\widehat{\mathbf{p}}_{\kappa +1}\cdot \dd \widehat{\mathbf{q}}_{\kappa +1}+\mathbf{p}_{\kappa +1}\cdot d\mathbf{q}_{\kappa +1} + D_3S^{\kappa +1}(\widehat{\mathbf{q}}_{\kappa +1},\mathbf{q}_{\kappa +1},\widehat{s}_{\kappa +1}) \widehat{\eta}_{\kappa +1}
 \end{equation}
 where $\widehat{\eta}_{\kappa +1}=
\dd \widehat{s}_{\kappa +1}-\widehat{\mathbf{p}}_{\kappa +1}\cdot \dd \widehat{\mathbf{q}}_{\kappa +1}$.

 Coming back to the expression of the total derivative of the Hamiltonian \eqref{totalderivativeham}, realize that the second term $\widehat{\mathbf{q}}_{\kappa+1}\cdot \dd \widehat{\mathbf{p}}_{\kappa+1}$ can be written as $d(\widehat{\mathbf{q}}_{\kappa+1}\cdot \widehat{\mathbf{p}}_{\kappa+1}) -\widehat{\mathbf{p}}_{\kappa+1}\cdot \dd \widehat{\mathbf{q}}_{\kappa+1}$. Then we substitute the expressions \eqref{totalderivativeS} and \eqref{Sj1} into the total derivative of the Hamiltonian function. In the light of these, we can continue the calculation \eqref{totalderivativeham} as follows
 \begin{equation}\label{totalderivativeham2}
   \begin{split}
 \dd {\widehat{H}^{+}_{d}}(\widehat{\mathbf{q}}_\kappa,&\widehat{\mathbf{p}}_{\kappa+1},\widehat{s}_\kappa)=\widehat{\mathbf{p}}_\kappa\cdot \dd \widehat{\mathbf{q}}_\kappa+\widehat{\mathbf{q}}_{\kappa+1}\cdot \dd \widehat{\mathbf{p}}_{\kappa+1}+D_3{\widehat{H}^{+}_{d}}(\widehat{\mathbf{q}}_\kappa,\widehat{\mathbf{p}}_{\kappa+1},\widehat{s}_\kappa) \widehat{\eta}_\kappa
 \\&
 =\widehat{\mathbf{p}}_\kappa\cdot \dd \widehat{\mathbf{q}}_\kappa -\widehat{\mathbf{p}}_{\kappa+1}\cdot \dd \widehat{\mathbf{q}}_{\kappa+1}+\dd(\widehat{\mathbf{q}}_{\kappa+1}\cdot \widehat{\mathbf{p}}_{\kappa+1})+D_3{\widehat{H}^{+}_{d}}(\widehat{\mathbf{q}}_\kappa,\widehat{\mathbf{p}}_{\kappa+1},\widehat{s}_\kappa) \widehat{\eta}_\kappa
 \\& = \mathbf{p}_{\kappa}\cdot \dd \mathbf{q}_{\kappa } + D_3S^{\kappa }(\widehat{\mathbf{q}}_{\kappa },\mathbf{q}_{\kappa },\widehat{s}_{\kappa }) \widehat{\eta}_{\kappa } - \dd S^{\kappa}  \\&\qquad -\big( \mathbf{p}_{\kappa +1}\cdot \dd \mathbf{q}_{\kappa +1} +  D_3S^{\kappa +1}(\widehat{\mathbf{q}}_{\kappa +1},\mathbf{q}_{\kappa +1},\widehat{s}_{\kappa +1}) \widehat{\eta}_{\kappa +1} - \dd S^{\kappa+1} \big) 
 \\&\qquad +
 \dd(\widehat{\mathbf{q}}_{\kappa+1}\cdot \widehat{\mathbf{p}}_{\kappa+1})+D_3{\widehat{H}^{+}_{d}}(\widehat{\mathbf{q}}_\kappa,\widehat{\mathbf{p}}_{\kappa+1},\widehat{s}_\kappa) \widehat{\eta}_\kappa.
  \end{split}
 \end{equation}

\textit{Step 3.}   We take the exterior derivative of the second identity in \eqref{genfunction}. 
 See that 
 \begin{equation}
   \begin{split}
ds_\kappa&=\dd \widehat{s}_\kappa+\dd S^\kappa(\widehat{\mathbf{q}}_\kappa,\mathbf{q}_\kappa,\widehat{s}_\kappa)\\&=\dd \widehat{s}_\kappa + D_1S^\kappa(\widehat{\mathbf{q}}_\kappa,\mathbf{q}_\kappa,\widehat{s}_\kappa)\cdot \dd \widehat{\mathbf{q}}_\kappa+ 
D_2S^\kappa(\widehat{\mathbf{q}}_\kappa,\mathbf{q}_\kappa,\widehat{s}_\kappa)\cdot \dd \mathbf{q}_\kappa + D_3S^\kappa(\widehat{\mathbf{q}}_\kappa,\mathbf{q}_\kappa,\widehat{s}_\kappa)\cdot \dd \widehat{s}_\kappa
\\&=\dd \widehat{s}_\kappa - (1+D_3S^\kappa(\widehat{\mathbf{q}}_\kappa,\mathbf{q}_\kappa,\widehat{s}_\kappa))\widehat{\mathbf{p}}_\kappa \cdot \dd \widehat{\mathbf{q}}_\kappa + \mathbf{p}_\kappa  \cdot \dd \mathbf{q}_\kappa + D_3S^\kappa(\widehat{\mathbf{q}}_\kappa,\mathbf{q}_\kappa,\widehat{s}_\kappa)\cdot \dd \widehat{s}_\kappa
\\&= \big(1 + D_3S^\kappa(\widehat{\mathbf{q}}_\kappa,\mathbf{q}_\kappa,\widehat{s}_\kappa)\big) \widehat{\eta}_\kappa+ \mathbf{p}_\kappa  \cdot \dd \mathbf{q}_\kappa.
  \end{split}
  \end{equation}
  So that we arrive at two expressions. One is the relationship between the contact forms in terms of the generating function
   \begin{equation}
   \eta_\kappa=\big(1 + D_3S^\kappa(\widehat{\mathbf{q}}_\kappa,\mathbf{q}_\kappa,\widehat{s}_\kappa)\big) \widehat{\eta}_\kappa.
    \end{equation}
Additionally, we have the followings identities
\begin{equation}
\begin{split}
&\mathbf{p}_\kappa  \cdot \dd \mathbf{q}_\kappa +
D_3S^\kappa(\widehat{\mathbf{q}}_\kappa,\mathbf{q}_\kappa,\widehat{s}_\kappa) \widehat{\eta}_\kappa  = ds_\kappa - \widehat{\eta}_\kappa,
\\
&\mathbf{p}_{\kappa+1}  \cdot \dd \mathbf{q}_{\kappa+1} +
D_3S^\kappa(\widehat{\mathbf{q}}_{\kappa+1},\mathbf{q}_{\kappa+1},\widehat{s}_{\kappa+1}) \widehat{\eta}_{\kappa+1}  = \dd s_{\kappa+1} - \widehat{\eta}_{\kappa+1}.
\end{split}
\end{equation}
We continue the calculation \eqref{totalderivativeham2} by  substituting these equations. Accordingly, we have that 
 \begin{equation}\label{pre-HJ}
   \begin{split}
 \dd {\widehat{H}^{+}_{d}}(\widehat{\mathbf{q}}_\kappa,\widehat{\mathbf{p}}_{\kappa+1},\widehat{s}_\kappa)&=
 \mathbf{p}_{\kappa}\cdot \dd \mathbf{q}_{\kappa } + D_3S^{\kappa }(\widehat{\mathbf{q}}_{\kappa },\mathbf{q}_{\kappa },\widehat{s}_{\kappa }) \widehat{\eta}_{\kappa } - \dd S^{\kappa}  \\&\qquad -\big( \mathbf{p}_{\kappa +1}\cdot \dd \mathbf{q}_{\kappa +1} +  D_3S^{\kappa +1}(\widehat{\mathbf{q}}_{\kappa +1},\mathbf{q}_{\kappa +1},\widehat{s}_{\kappa +1}) \widehat{\eta}_{\kappa +1} - \dd S^{\kappa+1} \big) 
 \\&\qquad +
 d(\widehat{\mathbf{q}}_{\kappa+1}\cdot \widehat{\mathbf{p}}_{\kappa+1})+D_3{\widehat{H}^{+}_{d}}(\widehat{\mathbf{q}}_\kappa,\widehat{\mathbf{p}}_{\kappa+1},\widehat{s}_\kappa) \widehat{\eta}_\kappa.
 \\& =\dd s_\kappa - \widehat{\eta}_\kappa - \dd S^{\kappa} - (\dd s_{\kappa+1} - \widehat{\eta}_{\kappa+1} - \dd S^{\kappa+1} ) +
\dd (\widehat{\mathbf{q}}_{\kappa+1}\cdot \widehat{\mathbf{p}}_{\kappa+1})\\&\qquad +D_3{\widehat{H}^{+}_{d}}(\widehat{\mathbf{q}}_\kappa,\widehat{\mathbf{p}}_{\kappa+1},\widehat{s}_\kappa) \widehat{\eta}_\kappa.
  \end{split}
\end{equation}
 Since on image space, i.e, the space with coordinates $(\widehat{\mathbf{q}},\widehat{\mathbf{p}},\widehat{s})$,  the dynamics is rendered trivial \eqref{tri-disc}, we take the Hamiltonian function as 
 \begin{equation}
 \widehat{H}^+_d(\widehat{\mathbf{q}}_\kappa,\widehat{\mathbf{p}}_{\kappa+1},\widehat{s}_\kappa)=\widehat{\mathbf{q}}_\kappa\cdot \widehat{\mathbf{p}}_{\kappa+1}.
 \end{equation}
This gives that the Hamiltonian function is independent of $\widehat{s}_\kappa$ so that $D_3{\widehat{H}^{+}_{d}}$ vanishes identically. We substitute this into \eqref{pre-HJ} and arrive at 
 \begin{equation}\label{preHJ-1}
 \begin{split}
\dd {\widehat{H}^{+}_{d}}& =\dd s_\kappa - \widehat{\eta}_\kappa - \dd S^{\kappa} - (\dd s_{\kappa+1} - \widehat{\eta}_{\kappa+1} - \dd S^{\kappa+1} ) +
 \dd (\widehat{\mathbf{q}}_{\kappa+1}\cdot \widehat{\mathbf{p}}_{\kappa+1})  +D_3{\widehat{H}^{+}_{d}} \widehat{\eta}_\kappa.
 \\
 \dd (\widehat{\mathbf{q}}_\kappa\cdot \widehat{\mathbf{p}}_{\kappa+1})&= \dd s_\kappa - \widehat{\eta}_\kappa - \dd S^{\kappa} - (ds_{\kappa+1} - \widehat{\eta}_{\kappa+1} - \dd S^{\kappa+1} ) +
 \dd (\widehat{\mathbf{q}}_{\kappa+1}\cdot \widehat{\mathbf{p}}_{\kappa+1}).
 \end{split}
\end{equation}

\textit{Step 4.} Since the dynamics is trivial on the image space, the discrete contact forms coincide \begin{equation}\label{eta-cons}
\widehat{\eta}_\kappa=\widehat{\eta}_{\kappa+1}
\end{equation}
and so do the coupling functions
 \begin{equation}\label{cross-cons}
 \widehat{\mathbf{q}}_\kappa\cdot \widehat{\mathbf{p}}_{\kappa+1}=\widehat{\mathbf{q}}_{\kappa+1}\cdot \widehat{\mathbf{p}}_{\kappa+1}=\widehat{\mathbf{q}}_0\cdot \widehat{\mathbf{p}}_{0}.
 \end{equation}
Now, we take the exterior derivative of the third equation in \eqref{disconteq} and replace the second equation in \eqref{disconteq}. This gives
\begin{equation}\label{ds-ds}
\begin{split}
\dd s_{\kappa+1}-\dd s_\kappa&=\dd (\mathbf{p}_{\kappa+1}\cdot \mathbf{q}_{\kappa+1})-\dd H^+_d(\mathbf{q}_\kappa,\mathbf{p}_{\kappa+1},s_\kappa)\\&=
\dd (D_2 S^{\kappa+1}(\widehat{\mathbf{q}}_{\kappa+1},\mathbf{q}_{\kappa+1},\widehat{s}_{\kappa+1})\cdot \mathbf{q}_{\kappa+1})-\dd H^+_d(\mathbf{q}_\kappa,\mathbf{p}_{\kappa+1},s_\kappa)
\end{split}
\end{equation}
We collect all the equations in \eqref{eta-cons}, \eqref{cross-cons} and \eqref{ds-ds} and substitute them into \eqref{preHJ-1}. These read the following
\begin{equation}
    \begin{split}
       \dd (\widehat{\mathbf{q}}_\kappa\cdot \widehat{\mathbf{p}}_{\kappa+1})&= \dd s_\kappa - \widehat{\eta}_\kappa - \dd S^{\kappa} - (\dd s_{\kappa+1} - \widehat{\eta}_{\kappa+1} - \dd S^{\kappa+1} ) +
\dd (\widehat{\mathbf{q}}_{\kappa+1}\cdot \widehat{\mathbf{p}}_{\kappa+1})  \\
 \dd (\widehat{\mathbf{q}}_0 \cdot \widehat{\mathbf{p}}_{0})&= \dd  s_\kappa - \widehat{\eta}_\kappa - \dd S^{\kappa} - (\dd s_{\kappa+1} - \widehat{\eta}_{\kappa+1} - \dd S^{\kappa+1} ) +
\dd (\widehat{\mathbf{q}}_{0}\cdot \widehat{\mathbf{p}}_{0})
\\
 0&=\dd s_\kappa-\dd s_{\kappa+1}+\dd S^{\kappa+1}-\dd S^{\kappa}
 \\
  0&=-\dd (D_2 S^{\kappa+1}(\widehat{\mathbf{q}}_{\kappa+1},\mathbf{q}_{\kappa+1},\widehat{s}_{\kappa+1})\cdot \mathbf{q}_{\kappa+1})+\dd H^+_d
  (\mathbf{q}_\kappa,\mathbf{p}_{\kappa+1},s_\kappa)+\dd S^{\kappa+1}-\dd S^{\kappa}
   \\
  0&=\dd \big(S^{\kappa+1}-S^{\kappa}-D_2 S^{\kappa+1}(\widehat{\mathbf{q}}_{\kappa+1},\mathbf{q}_{\kappa+1},\widehat{s}_{\kappa+1})\cdot \mathbf{q}_{\kappa+1} +H^+_d(\mathbf{q}_\kappa,\mathbf{p}_{\kappa+1},s_\kappa)\big).
     \end{split}
\end{equation}
So that we have obtained a Hamilton-Jacobi equation \eqref{firstdiscrete} for discrete contact Hamiltonian dynamics.
\end{proof}

Now, fix the initial point $(\widehat{\mathbf{q}}_{0},\mathbf{q}_{0},\widehat{s}_{0})$. For the generating function \eqref{genfunction} presented in the previous subsection, we introduce a new notation as follows
\begin{equation}\label{newnotation}
S^{\kappa}_d(\mathbf{q}_{\kappa}):=\widehat{s}_{0}+S^{\kappa}(\widehat{\mathbf{q}}_{0},\mathbf{q}_{\kappa},\widehat{s}_{0}).
\end{equation}
This denotation determines a function from $Q$ to the real numbers. In this notation, the discrete contact Hamilton-Jacobi equation \eqref{firstdiscrete} turns out to be
  \begin{equation}\label{firstdiscrete-1}
   \begin{split}
   S^{\kappa+1}_d(\mathbf{q}_{\kappa +1})-S^{\kappa}_d(\mathbf{q}_{\kappa})-\dd S^{\kappa+1}_d(\mathbf{q}_{\kappa+1})\cdot \mathbf{q}_{\kappa+1} +H^+_d(\mathbf{q}_\kappa,\dd S^{\kappa+1}_d(\mathbf{q}_{\kappa+1}),S^{\kappa}_d(\mathbf{q}_{\kappa}))
   = 0.
   \end{split}
\end{equation}

 One can obtain a similar result in terms of the left discrete Hamiltonian. Using the new notation \eqref{newnotation}, we present the left discrete contact Hamilton--Jacobi equation
  \begin{equation}\label{firstdiscreteleft}
   \begin{split}
   S^{\kappa+1}_d(\mathbf{q}_{\kappa +1})-S^{\kappa}_d(\mathbf{q}_{\kappa})+\dd S^{\kappa}_d(\mathbf{q}_{\kappa})\cdot \mathbf{q}_{\kappa} +H^-_d(\mathbf{q}_{\kappa+1},\dd S^{\kappa}_d(\mathbf{q}_{\kappa}),S^{\kappa+1}_d(\mathbf{q}_{\kappa+1}))
   = 0.
   \end{split}
\end{equation}

\subsection{A Geometric Discrete Hamilton-Jacobi Theory on Contact Manifolds}\label{gHJ-Section}

Having obtained a discrete contact Hamilton--Jacobi equation \eqref{firstdiscrete-1} in the previous section, it is easy to see that that the differential of the discrete generating function \eqref{newnotation} is precisely the discrete version of the continuous section $\gamma$ in Theorem \label{HJT-Con}.
Here, we aim at interpreting the discrete contact Hamilton--Jacobi equation \eqref{firstdiscrete-1} in terms of discrete flows.

\textbf{Projection of Discrete Flow.} Consider a Hamiltonian function  $H^{+}_d=H^{+}_d(\mathbf{q}_\kappa,\mathbf{p}_{\kappa+1},s_\kappa)$. Referring to the commutativity of diagram \eqref{flows} and in the light of the discrete contact Hamiltonian dynamics  \eqref{disconteq}, we compute the Hamiltonian flow on the extended cotangent bundle 
as
\begin{equation}
\tilde{\Phi}:
\Big(\mathbf{q}_\kappa,\frac{D_1H^{+}_d }
  {1-D_3H^{+}_d},s_\kappa\Big)\longrightarrow \Big(D_2H^{+}_d,\mathbf{p}_{\kappa+1},s_\kappa+\mathbf{p}_{\kappa+1}\cdot D_2H^+_d -H^+_d \Big)
\end{equation}
Recall the projection $\rho:T^*Q\times \mathbb{R} \mapsto Q\times \mathbb{R}$  given in \eqref{rho}. For a function $S_d=S_d(\mathbf{q}_\kappa)$ on the base space, we define local sections of the projection $\rho$ as 
\begin{equation}\label{gamma-a}
\begin{split}
\gamma^\kappa &:Q \longrightarrow T^*Q\times \mathbb{R},\qquad \mathbf{q}_\kappa \mapsto (\mathbf{q}_\kappa,\dd S_d^\kappa(\mathbf{q}_\kappa),S_d^\kappa(\mathbf{q}_\kappa)),
\\
\gamma^{\kappa+1} &:Q \longrightarrow T^*Q\times \mathbb{R},\qquad  \mathbf{q}_{\kappa+1} \mapsto (\mathbf{q}_{\kappa+1},\dd S_d^\kappa(\mathbf{q}_{\kappa+1}),S_d^{\kappa+1}(\mathbf{q}_{\kappa+1})).
\end{split}
\end{equation}
According to the commutative diagram
\begin{equation}
\xymatrix{ (\mathbf{q}_\kappa,\mathbf{p}_\kappa,s_\kappa)
\ar[dd] ^{\rho^*} \ar[rrr]^{\tilde{\Phi}}&   & & (\mathbf{q}_{\kappa+1},\mathbf{p}_{\kappa+1},s_{\kappa+1})\ar[dd]^{\rho^*}\\
  &  & &\\
 (\mathbf{q}_\kappa,s_\kappa)\ar@/^2pc/[uu]^{\gamma^\kappa}\ar[rrr]^{\tilde{\Phi}^{\gamma^\kappa}}&  & & (\mathbf{q}_{\kappa+1},s_{\kappa+1})\ar@/_2pc/[uu]_{\gamma^{\kappa+1}}}
\end{equation}
we pull down the flow $\tilde{\Phi}$ to the base manifold $Q$ and obtain the following projected discrete flow
\begin{equation}
\tilde{\Phi}^{\gamma}:Q  \longrightarrow Q  ,\qquad  \mathbf{q}_{\kappa} \mapsto \mathbf{q}_{\kappa +1}=D_2H^{+}_d \big(\mathbf{q}_\kappa, \dd s_d^{\kappa+1} (\mathbf{q}_{\kappa+1}),S_d^{\kappa}(\mathbf{q}_{\kappa})\big).
\end{equation}
This procedure reduces the number of dependent variables to the coordinates $\mathbf{q}$ on the base manifold $Q$. By means of the section $\gamma$, one can lift a solution of the projected flow to the extended cotangent bundle level. The lifted solution becomes a solution to the Hamiltonian flow if and only if the  commutation relation 
\begin{equation}\label{commit}
\tilde{\Phi}\circ \gamma^{\kappa} = \gamma^{\kappa+1}\circ \tilde{\Phi}^{\gamma}.
\end{equation}
holds. 
\begin{theorem}\label{gHJ-discrete}
For a section $\gamma$ admitting the local forms in  \eqref{gamma-a}, the following conditions are equivalent: 
\begin{enumerate}
   \item The flows $\tilde{\Phi}$ and $\tilde{\Phi}^{\gamma}$ commute, i.e., $\tilde{\Phi}\circ \gamma^{\kappa} = \gamma^{\kappa+1}\circ \tilde{\Phi}^{\gamma}.$
    \item $S$ solves the HJ equation \eqref{firstdiscrete-1}.
\end{enumerate}  
\end{theorem}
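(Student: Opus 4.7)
The plan is to prove the equivalence by writing out both sides of the commutation relation $\tilde{\Phi}\circ \gamma^{\kappa} = \gamma^{\kappa+1}\circ \tilde{\Phi}^{\gamma}$ in Darboux coordinates and comparing entry by entry with the discrete Hamilton--Jacobi equation \eqref{firstdiscrete-1}. At heart, the argument is just an unpacking of the definitions of $\gamma^{\kappa}$, $\gamma^{\kappa+1}$, and $\tilde{\Phi}^{\gamma}$, combined with the right discrete contact Hamilton equations \eqref{disconteq}; the structure follows the symplectic template of \cite{LeonSar3,OhsawaBlochLeok}, with the third ($s$-)slot carrying the contact contribution.

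For the implication $(1)\Rightarrow(2)$, I would start from an arbitrary $\mathbf{q}_{\kappa}\in Q$ and compute
\begin{equation*}
\tilde{\Phi}\circ \gamma^{\kappa}(\mathbf{q}_{\kappa}) = \tilde{\Phi}\bigl(\mathbf{q}_{\kappa},\dd S_d^{\kappa}(\mathbf{q}_{\kappa}),S_d^{\kappa}(\mathbf{q}_{\kappa})\bigr) = (\mathbf{q}_{\kappa+1},\mathbf{p}_{\kappa+1},s_{\kappa+1}),
\end{equation*}
where, using \eqref{disconteq}, $\mathbf{q}_{\kappa+1}=D_2H^{+}_d$, $\mathbf{p}_{\kappa}= \dd S_d^{\kappa}(\mathbf{q}_\kappa)$ is fixed implicitly through the middle equation, and $s_{\kappa+1}=S_d^{\kappa}(\mathbf{q}_{\kappa})+\mathbf{p}_{\kappa+1}\cdot D_2H^{+}_d - H^{+}_d$. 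On the other hand, $\gamma^{\kappa+1}\circ \tilde{\Phi}^{\gamma}(\mathbf{q}_{\kappa}) = \bigl(\mathbf{q}_{\kappa+1},\dd S_d^{\kappa+1}(\mathbf{q}_{\kappa+1}),S_d^{\kappa+1}(\mathbf{q}_{\kappa+1})\bigr)$, with $\mathbf{q}_{\kappa+1}=\tilde{\Phi}^{\gamma}(\mathbf{q}_{\kappa})$ by definition. Equality of the first entries is then tautological. Equality of the second entries forces the identification $\mathbf{p}_{\kappa+1}=\dd S_d^{\kappa+1}(\mathbf{q}_{\kappa+1})$, which is precisely the rule by which the section $\gamma^{\kappa+1}$ recovers momenta from the generating function. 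Substituting this identification in the third entry and rearranging yields exactly the equation \eqref{firstdiscrete-1}.

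For the converse $(2)\Rightarrow(1)$, I would reverse the calculation: assume that $S_d^{\kappa}$ satisfies \eqref{firstdiscrete-1}, substitute $\mathbf{p}_{\kappa+1}=\dd S_d^{\kappa+1}(\mathbf{q}_{\kappa+1})$ and $s_{\kappa}=S_d^{\kappa}(\mathbf{q}_{\kappa})$ into \eqref{disconteq}, and read off that the triple $(\mathbf{q}_{\kappa+1},\mathbf{p}_{\kappa+1},s_{\kappa+1})$ produced by $\tilde{\Phi}\circ \gamma^{\kappa}$ coincides with $\gamma^{\kappa+1}\circ \tilde{\Phi}^{\gamma}(\mathbf{q}_{\kappa})$ entry by entry. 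The first entry identifies $\tilde{\Phi}^{\gamma}$ through $\mathbf{q}_{\kappa+1}=D_2H^{+}_d(\mathbf{q}_{\kappa},\dd S_d^{\kappa+1}(\mathbf{q}_{\kappa+1}),S_d^{\kappa}(\mathbf{q}_{\kappa}))$, the second entry is the identification already used, and the third entry is exactly the HJ equation rearranged.

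The main obstacle I anticipate is the implicit nature of the discrete contact Hamilton equations: the momentum $\mathbf{p}_{\kappa+1}$ appears on both sides through $H^{+}_d(\mathbf{q}_{\kappa},\mathbf{p}_{\kappa+1},s_{\kappa})$, and $\tilde{\Phi}^{\gamma}$ is therefore only implicitly defined. To make the coordinate comparison rigorous and ensure the local existence of $\tilde{\Phi}^{\gamma}$, the argument must invoke the regularity hypothesis \eqref{reg-cond} (equivalently, the nondegeneracy of $H^{+}_d$ with respect to its middle slot) so that the equations $\mathbf{q}_{\kappa+1}=D_2H^{+}_d$ and $\mathbf{p}_{\kappa+1}=\dd S_d^{\kappa+1}(\mathbf{q}_{\kappa+1})$ admit a unique solution $\mathbf{q}_{\kappa+1}$ as a function of $\mathbf{q}_{\kappa}$. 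Apart from this implicit-function bookkeeping, the remainder of the proof reduces to the algebraic identity already performed in Theorem \ref{theorem-dHJ}.
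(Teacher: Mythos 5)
Your forward direction $(1)\Rightarrow(2)$ matches the paper: writing out both sides of $\tilde{\Phi}\circ \gamma^{\kappa} = \gamma^{\kappa+1}\circ \tilde{\Phi}^{\gamma}$ in coordinates and reading off the HJ equation from the $s$-component is exactly what the paper does. The problem is the converse, where your proposal has a genuine gap. The commutation relation is \emph{not} equivalent to the single scalar equation \eqref{firstdiscrete-1}; unpacked in coordinates it yields \emph{two} independent conditions (the paper's \eqref{condi}): the HJ equation itself, coming from the third slot, and the momentum-matching condition
\begin{equation*}
\dd S_d^{\kappa}(\mathbf{q}_\kappa)= \frac{D_1H^{+}_d \big(\mathbf{q}_\kappa, \dd S_d^{\kappa+1}(\mathbf{q}_{\kappa+1}) ,S_d^{\kappa}(\mathbf{q}_{\kappa})\big)}{1-D_3H^{+}_d\big(\mathbf{q}_\kappa, \dd S_d^{\kappa+1}(\mathbf{q}_{\kappa+1}) ,S_d^{\kappa}(\mathbf{q}_{\kappa})\big)},
\end{equation*}
coming from the second slot. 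You dismiss the second slot as ``the identification already used,'' but it is not definitional: the flow $\tilde{\Phi}$ applied to $\gamma^{\kappa}(\mathbf{q}_\kappa)$ determines $\mathbf{p}_{\kappa+1}$ implicitly from the \emph{input} momentum $\dd S_d^{\kappa}(\mathbf{q}_\kappa)$ through the middle equation of \eqref{disconteq}, and there is no a priori reason this output should coincide with $\dd S_d^{\kappa+1}(\mathbf{q}_{\kappa+1})$, which is what $\gamma^{\kappa+1}$ prescribes. Simply ``substituting $\mathbf{p}_{\kappa+1}=\dd S_d^{\kappa+1}(\mathbf{q}_{\kappa+1})$ into \eqref{disconteq}'' assumes the very compatibility you need to prove.

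The missing idea, which is the actual content of the paper's proof, is that the momentum-matching condition is the derivative of the HJ equation: differentiating \eqref{firstdiscrete-1} with respect to $\mathbf{q}_\kappa$ (with $\mathbf{q}_{\kappa+1}$ held fixed) gives
\begin{equation*}
0=\dd S_d^{\kappa}(\mathbf{q}_{\kappa}) - D_1H^{+}_d - D_3H^{+}_d\,\dd S_d^{\kappa}(\mathbf{q}_{\kappa}),
\end{equation*}
and collecting the $\dd S_d^{\kappa}$ terms recovers exactly the displayed condition. Only with this step does assuming (2) deliver \emph{both} component equations and hence the full commutation relation. Your closing remark about regularity and the implicit-function theorem is sensible but secondary; without the differentiation argument the implication $(2)\Rightarrow(1)$ is not established.
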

\begin{proof}
 In terms of the local coordinates, the commutation condition (that is the first of the condition in the statement of the theorem) gives rise to the following equations
\begin{equation}\label{condi}
\begin{split}
S_d^{\kappa+1}(\mathbf{q}_{\kappa+1})&=S_d^{\kappa}(\mathbf{q}_{\kappa})+\dd s_d^{\kappa+1}(\mathbf{q}_{\kappa+1})\cdot \mathbf{q}_{\kappa+1} - H^{+}_d \big(\mathbf{q}_\kappa, \dd s_d^{\kappa+1}(\mathbf{q}_{\kappa+1}),S_d^{\kappa}(\mathbf{q}_{\kappa})\big),
\\
\dd S_d^{\kappa}(\mathbf{q}_\kappa)&= {D_1H^{+}_d \big(\mathbf{q}_\kappa, \dd S_d^{\kappa+1}(\mathbf{q}_{\kappa+1}) ,S_d^{\kappa}(\mathbf{q}_{\kappa})\big)}\big / \big(
   1-D_3H^{+}_d\big(\mathbf{q}_\kappa, \dd S_d^{\kappa+1}(\mathbf{q}_{\kappa+1}) ,S_d^{\kappa}(\mathbf{q}_{\kappa})\big)\big).
\end{split}
\end{equation}
A direct comparison gives that the second condition in \eqref{condi} is the infinitesimal version of the first one. Indeed, if we take the derivative of the first equation in \eqref{condi}  with respect to $\mathbf{q}_\kappa$, we have that 
\begin{equation}
0=\dd S_d^{\kappa}(\mathbf{q}_{\kappa}) - D_1H^{+}_d \big(\mathbf{q}_\kappa, \dd S_d^{\kappa+1}(\mathbf{q}_{\kappa+1}),S_d^{\kappa}(\mathbf{q}_{\kappa})\big)  - D_3H^{+}_d \big(\mathbf{q}_\kappa, \dd S_d^{\kappa+1}(\mathbf{q}_{\kappa+1}),S_d^{\kappa}(\mathbf{q}_{\kappa})\big)\dd S_d^{\kappa}(\mathbf{q}_{\kappa})
\end{equation}
By collecting all the terms involving $\dd S_d^{\kappa}$ in the left hand side, it is immediate to see that it is precisely the second condition in \eqref{condi}. We remark that the first equation is the discrete Hamilton-Jacobi equation \eqref{firstdiscrete-1}. The inverse of the assertion is straightforwardly proved by reversing the arguments.
\end{proof}

\section{An Application: The Parachute Equation}\label{Application}

In \cite{gaset2020new} the so-called parachute equation was introduced. This equation  describes the vertical motion of a particle falling in a fluid under the action of constant gravity. The friction is modeled by the drag equation and it is proportional to the square of the velocity.
Here we present the Hamiltonian associated with the discrete parachute equation on
the extended phase space 
$T^*Q\times \mathbb{R}$
%\simeq \mathbb{R}^N\times \mathbb{R}^N\times \mathbb{R}^N$, 
where $Q$ is a one-dimensional manifold with dynamic variable $q$ that is discretized as: $q\rightarrow (q_1,\dots,q_N)$, so is the associated momentum $p$ as $p\rightarrow (p_1,\dots,p_N)$. So, the discrete right contact Hamiltonian for the parachute reads:
\begin{equation}\label{eq:parachute_hamiltoniam2}
    H^+_d(q_\kappa,p_{\kappa+1},s_\kappa) = \frac{1}{2 m}{(p_{\kappa+1}+2\lambda s_\kappa)}^2 + \frac{mg}{2 \lambda} (e^{2 \lambda q_\kappa} -1),
\end{equation}
where $\lambda, g \in \mathbb{R}$. Calculating the discrete right contact Hamiltonian equations we have the following discrete dynamics:
\begin{align*}
    q_{\kappa+1}&= \frac{1}{m}\left(p_{\kappa+1}+2\lambda s_\kappa\right),\\
    p_\kappa&=\frac{mge^{2\lambda q_\kappa}}{1-\frac{2\lambda}{m}\left(p_{\kappa+1}+2\lambda s_\kappa\right)},\\
    s_{\kappa+1}&=s_\kappa+\frac{p_{\kappa+1}^2}{2m}-\frac{2\lambda^2 s_\kappa}{m}-\frac{mg}{2\lambda}\left(e^{2\lambda q_\kappa}-1\right)
\end{align*}

This dynamic is represented in the following diagram obtained by plotting the points $(q(\kappa),p(\kappa),s(\kappa))$ which have been integrated recursively for different values of $\kappa$.
For $\lambda=-0.01$, $m=1$, $g=10$ and initial condition $(q(0),p(0),s(0))=(100,1,1)$ we obtain the dynamics of the parachute as follows.
\begin{equation}\label{figure}
    \begin{matrix}
\includegraphics[scale=0.4]{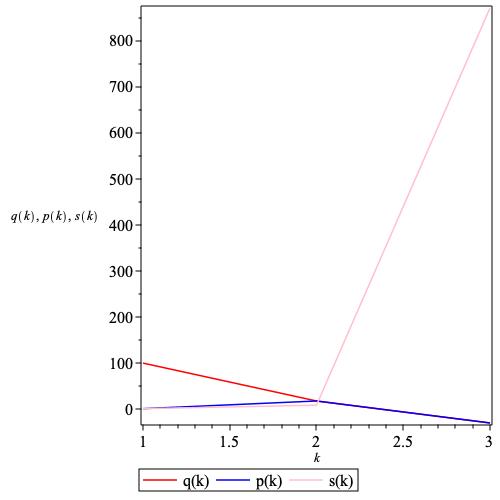} & \includegraphics[scale=0.4]{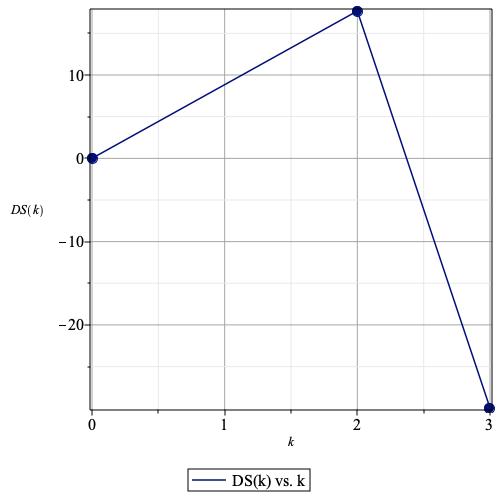}  \\
\text{Figure 1: Dynamics of the parachute} & \text{Figure 2:  Evolution of $DS(k)$}
\end{matrix}
\end{equation}
Here it is easy to depict from the figure \ref{figure} that the position coordinate (red line on the left) descends from the initial position $q(0)=100$ until the parachute reaches the ground. The momentum (blue line on the left)  grows until it crosses the red line. This crossing point represents the instant in which the parachute opens and the momentum decreases until the parachute touches the ground, i.e., the parachute decelerates so the parachuter can touch the ground safely. Nonetheless, as it happens, the velocity is not zero when the parachuter touches the ground.

Now, in order to prove the accuracy of the discrete contact Hamilton--Jacobi equation, we plot the differential of the generating function, i.e., $DS^{\kappa}(q_{\kappa})$, which according to the first equation in \eqref{genfunction}, it should match the dynamic of the blue line in the diagram depicted above. Let us corroborate this fact by plotting $\gamma=DS^\kappa(q_\kappa)$ for different discrete values of $\kappa$. For $\lambda=-0.01$, $m=1$, $g=10$ and initial condition $(q(0),S(0),s(0))=(100,1,1)$ we obtain:

It is easy to check that the line representing $DS(k)$ (blue line on the right figure \eqref{figure}) is very similar to the blue line depicted in blue line on the left. Indeed, both lines should be equal (see that for $k=2$ both reach the maximum momentum $p$) since in the vicinity of the discrete Hamilton--Jacobi equation, $p(k)=DS(k)$.

\section{Conclusions}

In this paper, we have proposed a discrete Hamilton--Jacobi equation (stated in Theorem \ref{theorem-dHJ}) for contact discrete Hamiltonian dynamics. We have presented its geometric foundations in terms of contact discrete flows in Theorem \ref{gHJ-discrete}. We have exhibited the relationship between the continuous geometric HJ Theorem \ref{HJT-Con} for contact Hamiltonian dynamics and Theorem \ref{gHJ-discrete} for discrete contact Hamiltonian dynamics.

We wish to continue in the following directions: 

\begin{itemize} 
\item Contact Hamiltonian dynamics does not preserve the Hamiltonian function. There exists an alternative characterization of Hamiltonian dynamics on contact manifolds that preserves the energy, and it is known as evolution dynamics \cite{Simoes-thermo,Simoes2020b}. We wish to examine the discretization of evolution dynamics and its HJ formulation.    
\item For the extended cotangent bundle $T^*Q\times \mathbb{R}$, the continuous HJ theory for contact Hamiltonian dynamics was presented in \cite{LeonLainz4,LeonSar3,esen2021implicit}. The authors consider the base manifold as the extended configuration space $Q\times \mathbb{R}$. In this work, we consider the base manifold as $Q$. We wish to write a discrete HJ equation on contact manifolds with base manifold $Q\times \mathbb{R}$. 
\item If a Lagrangian is degenerate, then one cannot arrive at explicit Euler-Lagrange equations. In this case, the Legendre transformation is not immediate. Tulczyjew's triple is a geometric formulation that allows us to achieve this in singular cases as well \cite{tul77}. In a discrete framework, Tulczyjew's triple was constructed in \cite{Leok-Tulc}. This determines a proper geometry for implicit discrete Lagrangian and Hamiltonian dynamics \cite{Iglesias-discrete}. For the continuous case, a geometric HJ theory has been recently given in  \cite{EsLeSa18,EsenLeonSar2} in the symplectic framework  and in \cite{esen2021implicit} in the contact framework. We wish to concentrate on generalizing the discrete HJ theories both for symplectic and contact geometry including the implicit case. On the other hand, the Tulczyjew triple for contact geometry has been recently constructed in \cite{esen2021contact}. In the future, we aim at constructing a discrete contact Tulczyjew's triple. 
\end{itemize}

\section*{Acknowledgements}
This work has been partially supported by MINECO MTM 2013-42-870-P and
the ICMAT Severo Ochoa project SEV-2011-0087. Marcin Zajac and Cristina Sardón acknowledge the funding from Universidad Politécnica de Madrid and the department of Applied Mathematics at ETSII for hosting Prof. Marcin Zajac to work alongside Cristina Sardón.

\bibliographystyle{abbrv}
\bibliography{references}

\end{document}